\documentclass[12pt]{article}
\usepackage{mathtools}
\usepackage{amsthm}
\usepackage{amssymb}
\usepackage{subcaption}
\usepackage{comment}
\usepackage{url}
\usepackage{multicol}
\usepackage{tikz}
\usepackage[a4paper, margin=1in]{geometry} 

\newtheorem{theorem}{Theorem}[section]

\newtheorem{conjecture}[theorem]{Conjecture}

\theoremstyle{definition}
\newtheorem{definition}[theorem]{Definition}

\begin{document}
	\title{Predator-dependent replicator dynamics or a predator-prey model with two prey types and frequency dependence}
	\author{Henrique M. Cruz$^{1}$\\ Armando G. M. Neves$^{2}$
	\\
	\normalsize{$^{1}$School of Engineering, Universidade Federal de Minas Gerais}
	\\ 	\normalsize{hmarquescruz@ufmg.br}\\
	\normalsize{$^{2}$Department of Mathematics, Universidade Federal de Minas Gerais}
		\\ 	\normalsize{aneves@mat.ufmg.br}\\
}

\maketitle
	
	\begin{abstract}
		Braga and Wardil [J. Phys. A: Math. Theor. 55 (2022) 025601] introduced a population model for two prey species that compete among themselves and are preyed upon by a single predator species. They showed the existence of 16 dynamic scenarios and stated sufficient conditions for the stable coexistence of the three species. The model, which can be seen as replicator dynamics with predator-dependent fitnesses for the prey, is based on two pay-off matrices: one for prey reproduction and one for interaction between prey and predators.
		
		We argue that the model can also be seen as a Lotka-Volterra-type predator-prey model with a single prey species, logistic limitation for prey, and frequency-dependent reproduction and capture coefficients. Using this alternative viewpoint, we obtain conditions for the existence of equilibria with the three types of individuals. We also prove theorems on the stability or instability of equilibria with only two species and relate the stability change of these equilibria to the appearance or disappearance of equilibria with the three species.
		
		When all parameters, except the one that regulates carrying capacities, are fixed, a rich cascade of bifurcations may appear. Solutions range from predator extinction due to insufficient prey to predators coexisting with one or two prey types. Sometimes stable limit cycles involving all species appear.
		
	\end{abstract}
	
	 \textbf{Keywords:} Lotka-Volterra, Evolutionary games, Bifurcations, Eco-evolutionary dynamics, Coexistence
	 
\section{Introduction}
Peacocks are notable for the colorful tails of the male individuals. A beautiful tail may increase the success of an individual in finding a female for mating, but it may also attract predators. How would it be if an alternative peacock type with less colorful tails, thus less attractive to females and predators, competed with the usual peacocks? Would it be possible for the two types to coexist with predators? Or would a single type (which one?) ultimately survive? With peacocks in mind as their primary motivation, but also describing many other possibilities, Braga and Wardil \cite{bw} introduced a deterministic population model for two species (or types) of prey -- let us call them types A and B -- and a single predator species that feeds on both prey species. Prey compete among themselves for limited environmental resources, may have larger or smaller reproductive success, and may be killed more or less easily by predators.

From a purely mathematical point of view, the Braga-Wardil model consists of a system of three ordinary differential equations (ODEs). As it turns out, systems of this dimension are considerably more difficult to analyze than two-dimensional ones. One reason is the absence of an analog of the Poincaré-Bendixson theorem, an important technical tool for the qualitative analysis of two-ODE systems; see, e.g., \cite{hirschsmaledevaney}. Despite that, there is a large literature on three-ODE models of biological systems. For instance, M. L. Zeeman \cite{zeeman} studied competitive systems for three species. Freedman and Waltman's work \cite{frewal} considered systems with three species too, prey and predators, including two prey and one predator species, as in this paper. Other studies of such systems are, e.g., \cite{adamu, koro, sharma, tripathi}.

As introduced by Braga and Wardil \cite{bw}, the model involves two pay-off matrices, one for the reproductive success of the prey types and one for their interaction with predators. If predators are absent, the model becomes the standard replicator equation \cite{tayjon, nowak} for the prey. 
In this case, depending on the entries in the reproduction pay-off matrix, we may have 4 possible dynamics: dominance of either prey type A or prey type B, stable coexistence of both prey types, or codominance \cite{nowak}. If predators are present, the second pay-off matrix comes into play, and 4 independent dynamic possibilities appear. Taking into account the two matrices, we have $4^2=16$ possible scenarios, as shown in Table \ref{tab:reppresce}. Braga and Wardil noticed that the population frequency of prey types obeyed a replicator equation in which the fitness of each type depends on its frequency, as usual in Evolutionary Game Theory \cite{hofsig, nowak}, but also on the size of the predator population. This explains the title of their paper, in which their model is called \emph{predator-dependent replicator dynamics}.

Such a diversity of dynamic scenarios and the fascinating examples of numerical solutions provided in \cite{bw} convinced us that studying the Braga-Wardil (shortened as BW from now on) model further and rigorously would be a delightful task, and could contribute to the dissemination of their model to a wider audience. The present article is the result of our efforts.

Besides predator-dependent replicator dynamics, we show that the BW model can also be seen as a Lotka-Volterra-type predator-prey model for a single prey species with a logistic competition term and coefficients dependent on the population frequency of prey type A. Using this alternative viewpoint, we prove that predators are extinct unless the carrying capacities of the prey are sufficiently large. We study in detail all equilibrium solutions in which only two types of individuals are present and obtain a new equation for calculating equilibrium solutions in which the three types are present (ABY equilibria). We provide several examples regarding the number of ABY equilibria in which we numerically evaluate whether they are stable or unstable. We prove important results on stability or instability of equilibrium solutions with only two species present. Despite that, stability (or not) of ABY equilibria turns out to be more elusive. We approach this question by using methods from Bifurcation Theory \cite{guckhol, perko}.

The paper is organized as follows. After reviewing in Section \ref{sec:rev} the construction of the BW model and some of the results in \cite{bw}, in Section \ref{sec:rew} we rewrite the model as a system in which the first equation is the predator-dependent replicator and the other two equations are the frequency-dependent Lotka-Volterra logistic predator-prey (LVLPP) model. In Section \ref{sec:ABYeq} we give some examples of ABY equilibria and numerically study whether they are stable or unstable. These examples are further explored in later sections of the paper.	In Section \ref{sec:stabAYBYAB} we prove theorems on the stability of equilibria with only two species present, relating them to creation or annihilation of ABY equilibria. These theorems are also interpreted in game-theoretic terms. In Section \ref{sec:bif} we prove that saddle-node bifurcations \cite{guckhol, perko} occur in the model. We also observe that bifurcations similar to transcritical occur and we state a conjecture on such bifurcations. We use the results on bifurcations to assess whether ABY equilibria with the three types are stable or unstable. In Section \ref{sec:hopf} we provide an example of a numerical solution to the model showing that the three types may coexist stably not only at equilibria, but also as limit cycles arising probably due to Hopf bifurcations \cite{guckhol, perko}. The paper closes with a Conclusions section, Section \ref{sec:conc}. Appendix \ref{app:lvlpp} is devoted to a review of the Lotka-Volterra logistic predator-prey model. Appendix \ref{app:sntc} states some results on Bifurcation Theory used in the main text.

\section{Review on the BW model}\label{sec:rev}
The model, as proposed by Braga and Wardil \cite{bw}, with slight changes in notation, is 
\begin{equation} \label{modelorig}
	\begin{cases} 
		A' =  \left( \frac{aA + bB}{A + B} - \frac{A + B}{K} - \frac{a^*A + b^*B}{A + B}Z \right) \,A \\
		B' = \left( \frac{cA + dB}{A + B} - \frac{A + B}{K} - \frac{c^*A + d^*B}{A + B}Z \right)\, B \\
		Z' = \left( -\beta + \gamma \frac{a^*A + b^*B}{A + B}A + \gamma \frac{c^*A + d^*B}{A + B}B \right) \,Z
	\end{cases}\;.
\end{equation}
$A$, $B$ and $Z$ are, respectively, the population sizes of the two prey types and of the predator species. All parameters in the equations are \textit{positive}, and we explain all terms in the following.

The term $\frac{aA + bB}{A + B}$ is a birth rate for the A prey population. If only the As are present, i.e. $B=0$, it reduces to $a$. Thus, $a$ is the birth rate for the A population if only As are present. Similarly, making $A=0$ (or, more precisely, taking $A$ very small), we see that $b$ is the birth rate for the As if only Bs are present. Analogously, $c$ and $d$ are, respectively, the birth rates for the B population if only As are present and only Bs are present. 

Defining 
\begin{equation} \label{defx}
	x= \frac{A}{A+B}
\end{equation}
as the population frequency of the A type in the total prey population, we see that $\frac{aA + bB}{A + B}=ax+b(1-x)$ is a \emph{frequency-dependent} birth rate for the A population, and $\frac{cA + dB}{A + B}=cx+d(1-x)$ is a frequency-dependent birth rate for the B population. 

In order to connect the model with well-known results in Evolutionary Game Theory, we may group parameters $a$, $b$, $c$ and $d$ in a \emph{reproduction} pay-off matrix
\begin{equation}  \label{defrepmat}
	M_R=\begin{pmatrix}
		a & b \\
		c & d
	\end{pmatrix}\;.
\end{equation}

In fact, if predators are absent, $Z=0$, we may use the first two equations in \eqref{modelorig} to show that the prey-population fraction of As obeys
\begin{equation}  \label{standrep}
	x'= (f_A(x)-f_B(x))x(1-x)\;,
\end{equation}
that readers familiar with Evolutionary Game Theory will recognize as the standard \emph{replicator equation} when the population consists of only two types of individuals, where 
\[f_A(x)= a x+b(1-x)\]
and
\[f_B(x)=c x+d(1-x)\]
are the standard fitness functions for A and B individuals, respectively. For the standard replicator equation, there exist four possible evolutionary scenarios \cite{nowak}, displayed in Table \ref{tab:repsce} along with the inequalities characterizing them.
\begin{table}   
	\begin{tabular}{cc}
		\hline\noalign{\smallskip}
		A dominance& $a>c$ and $b>d$   \\
		\noalign{\smallskip}\hline\noalign{\smallskip}
		Codominance & $a>c$ and $b<d$   \\
		\noalign{\smallskip}\hline\noalign{\smallskip}
		Coexistence & $a<c$ and $b>d$   \\
		\noalign{\smallskip}\hline\noalign{\smallskip}
		B dominance & $a<c$ and $b<d$   \\
		\noalign{\smallskip}\hline \smallskip
	\end{tabular}
	\caption{\label{tab:repsce} The 4 possible evolutionary scenarios in the standard replicator equation and the inequalities in the reproduction pay-off matrix characterizing them.}
\end{table}

Before considering the presence of the predators, we recognize the term $\frac{A + B}{K}$ present in the first two equations of \eqref{modelorig} as a competition term representing the fact that all prey compete among themselves -- regardless of type -- for environmental resources.  Parameter $K$ \emph{is not} the carrying capacity for the prey. In fact, it can be seen that, in the absence of predators and B prey, the first of eqs. \eqref{modelorig} becomes a simple \emph{logistic} model \cite{edel, murray} with the A prey equilibrium population equal to $Ka$. Likewise, $Kd$ is the B prey equilibrium population when the other two types are absent. As $K$ is a parameter that controls carrying capacities, we call it the \emph{carrying capacity control} parameter.

If only A prey and predators are present, i.e. $B=0$, \eqref{modelorig} reduces to 
\begin{equation} \label{modlvlpp}
	\begin{cases} 
		A' =  \left( a - \frac{A}{K} - a^* Z \right) \,A \\
		Z' = \left( -\beta + \gamma a^* A  \right) \,Z
	\end{cases}\;,
\end{equation}
which is an instance of what we will refer to as the Lotka-Volterra \emph{logistic} predator-prey (LVLPP) model \cite{hofsig}. In the LVLPP model, parameter $a^*$ is the capture rate. Of course, if only B prey and predators were present, \eqref{modelorig} would also become an LVLPP model for the Bs. In Appendix \ref{app:lvlpp} we review some results on the LVLPP model that will be useful for understanding the BW model.

Returning to \eqref{modelorig}, $\frac{a^*A + b^*B}{A + B}$ may now be interpreted as the frequency-dependent capture rate for the A prey. Similarly, $\frac{c^*A + d^*B}{A + B}$ is the frequency-dependent capture rate for the Bs. $a^*$ is the capture rate for A prey in a population with only A prey, $b^*$ is the capture rate for A prey in a population with only B prey, and so on. As with reproduction parameters, predation parameters $a^*$, $b^*$, $c^*$ and $d^*$ will be conveniently grouped in a predation pay-off matrix
\[\begin{pmatrix}
	a^* & b^* \\
	c^* & d^*
\end{pmatrix}\;.\]
The fact that the capture rates are not necessarily equal means that predators' preference for each kind of prey may depend on the prey population composition.

Finally, making $A=0$ and $B=0$ in \eqref{modelorig}, the model reduces to $Z'=-\beta Z$, which describes exponential decay of predators if there are no prey to feed them. Thus, $\beta$ is interpreted as the decay rate of predators.

Before proceeding, we may take the time to eliminate a non-essential parameter in the model. We rescale the predator population $Z$ by defining a new variable $Y$ as 
\[
Z = \frac{\gamma}{\beta} Y.
\]

Substituting this into the system leads to the transformed equations
\begin{equation} \label{BWmodel}
	\begin{cases} 
		A' =  \left( \frac{aA + bB}{A + B} - \frac{A + B}{K} - \frac{a'A + b'B}{A + B}Y \right)\;A \\
		B' =  \left( \frac{cA + dB}{A + B} - \frac{A + B}{K} - \frac{c'A + d'B}{A + B}Y \right)\;B \\
		Y' = \beta \left( -1 +\frac{a'A + b'B}{A + B}A + \frac{c'A + d'B}{A + B}B \right) \,Y
	\end{cases}\;, 
\end{equation}
where $a'=\frac{\gamma}{\beta}a^*$ and the same rescaling defines $b'$, $c'$ and $d'$. The above equations are what we will refer to hereafter as the BW model. In analogy with \eqref{defrepmat}, matrix $M_P$ below will be called the \emph{predation pay-off matrix}:
\begin{equation}  \label{defpredmat}
	M_P=\begin{pmatrix}
		a' & b' \\
		c' & d'
	\end{pmatrix}\;.
\end{equation}

As before, with $Z=0$, we now consider the presence of predators and combine the first two equations in \eqref{BWmodel} to obtain
\begin{equation}  \label{repl}
	x'= [f_A(x,Y)-f_B(x,Y)] \, x(1-x)\;,
\end{equation}
where 
\begin{equation} \label{A fit}
	f_A(x,Y) = (a-a'Y)x+(b-b'Y)(1-x)
\end{equation}
and  
\begin{equation} \label{B fit}
	f_B(x,Y) = (c-c'Y)x+(d-d'Y)(1-x)\;
\end{equation}
are fitness functions for the two prey types, now dependent also on the predator population size $Y$. Equation \eqref{repl} generalizes \eqref{standrep} and is what Braga and Wardil termed the \emph{predator-dependent} replicator equation.

\begin{table}     
	\begin{tabular}{ccccc}
		\hline\noalign{\smallskip}
		&$\begin{array}{c}
			\textrm{A dominance}\\
			a'<c'\\
			b'<d'
		\end{array}$&$\begin{array}{c}
			\textrm{Codominance}\\
			a'<c'\\
			b'>d'
		\end{array}$ &$\begin{array}{c}
			\textrm{Coexistence}\\
			a'>c'\\
			b'<d'
		\end{array}$ & $\begin{array}{c}
			\textrm{B dominance}\\
			a'>c'\\
			b'>d'
		\end{array}$ \\
		
		\noalign{\smallskip}\hline\noalign{\smallskip}
		$\begin{array}{c}
			\textrm{A dominance}\\
			a>c\\
			b>d
		\end{array}$ &1&2&3&4 \\
		\noalign{\smallskip}\hline
		$\begin{array}{c}
			\textrm{Codominance}\\
			a>c\\
			b<d
		\end{array}$ &5&6&7&8 \\
		\noalign{\smallskip}\hline
		$\begin{array}{c}
			\textrm{Coexistence}\\
			a<c\\
			b>d
		\end{array}$ &9&10&11&12 \\
		\noalign{\smallskip}\hline
		$\begin{array}{c}
			\textrm{B dominance}\\
			a<c\\
			b<d
		\end{array}$ &13&14&15&16 \\
		\noalign{\smallskip}\hline \smallskip
	\end{tabular}
	\caption{\label{tab:reppresce} The 16 possible evolutionary scenarios in the predator-dependent replicator equation obtained by independently combining inequalities in the reproduction and predation pay-off matrices. The numbering of the scenarios is the same as in \cite{bw}.}
\end{table}

We may combine inequalities between elements in the reproduction matrix $M_R$, as in Table \ref{tab:repsce}, with analogous inequalities in the predation matrix $M_P$. The resulting scenarios are shown in Table \ref{tab:reppresce} with the same numbering as they appear in \cite{bw}. As seen in \eqref{A fit} and \eqref{B fit}, the predation pay-offs decrease the corresponding fitness, instead of increasing it. This is why in Table \ref{tab:reppresce} the inequalities between elements of $M_P$ are reversed with respect to the inequalities between elements of $M_R$.

Along with the analog of Table \ref{tab:reppresce}, Braga and Wardil \cite{bw} used a color code to classify the 16 scenarios. Their classification regards whether conditions for the long-term coexistence of the three species cannot be satisfied, or can be satisfied, or are always satisfied. This classification is obtained using a theorem of Freedman and Waltman \cite{frewal}, stating \emph{sufficient conditions} for long-term coexistence of the three species in Appendix A of \cite{bw}. We stress that the classification result in their paper is also based on the approximation that the carrying capacity control parameter $K$ is very large. 

We must point out that one of the hypotheses of the Freedman-Waltman theorem, as stated in \cite{bw, frewal}, is that the functions appearing on the right-hand side of eqs. \eqref{BWmodel} are $C^1$. Of course, the right-hand sides are not even well-defined when $A=B=0$. Despite that, they may be defined as limits when $(A,B) \rightarrow (0,0)$ and become continuous functions, but non-differentiable at $(A,B)=(0,0)$. This fact means that, as stated, the Freedman-Waltman theorem is not applicable to the BW model. It is possible that the hypothesis of continuous differentiability when $(A,B)=(0,0)$ is not necessary, because we have found that the classification in \cite{bw} remains correct. 

But instead of trying to prove that the Freedman-Waltman theorem is applicable, we followed the more constructive path of determining all equilibrium solutions of \eqref{BWmodel}, classifying their stability, and studying the bifurcations to which they are subject. Moreover, we do not restrict our results to the limit of infinite $K$. 

It turns out that the sufficient conditions of the Freedman-Waltman theorem and the large $K$ requirement are too restrictive and we were able to obtain the possibility of long-term coexistence of the three species in all scenarios, except for scenarios 1 and 16. The inequalities that characterize scenarios 1 and 16, used in \eqref{repl}, show that either $x$ is always increasing (scenario 1), or always decreasing (scenario 16). If the solutions are well-defined up to infinite time, either B prey will be extinct, or A prey will be extinct.

\section{Rewriting the model and some consequences} \label{sec:rew}

Although useful in qualitative understanding of the model in terms of Evolutionary Game Theory, eq. \eqref{repl} alone cannot replace the three equations in \eqref{BWmodel}. In order to complete our understanding of the dynamics, we must complement \eqref{repl} with an equation for the size of the \textit{total} prey population and another for the size of the predator population.

We denote the total prey population as
\begin{equation} \label{totalprey}
	P=A+B\;.
\end{equation} 
By summing the first two in \eqref{BWmodel} we obtain, after some simplification, an equation for $P'$. And we may rewrite the third of eqs. \eqref{BWmodel} in terms of $x$, $P$ and $Y$. The result is
\begin{equation}  \label{xpyBW}
	\begin{cases} 
		x' = x(1-x)(f_A(x,Y)-f_B(x,Y))\\
		P' = (r(x) - \frac{P}{K} - \delta(x)Y)\, P\\
		Y' = ( -\beta + \beta \delta(x)P) \,Y
	\end{cases}\;,
\end{equation}
where
\begin{equation}\label{defr}
	r(x) = ax^2 + (b+c)x(1-x) + d(1-x)^2   
\end{equation}
and 
\begin{equation}\label{defdelta}
	\delta(x) = a'x^2 + (b'+c')x(1-x) + d'(1-x)^2\;.
\end{equation}

Eqs. \eqref{xpyBW} are clearly equivalent to the BW model \eqref{BWmodel}. The latter two equations in \eqref{xpyBW} are a frequency-dependent version of the LVLPP model \eqref{modlvlpp}, see also Appendix \ref{app:lvlpp}, with $r(x)$ being interpreted as a frequency-dependent prey birth rate and 
$\delta(x)$ as a frequency-dependent capture rate.

Notice that the frequency-dependent prey birth rate is just a population average of the fitnesses for A and B prey when predators are absent, i.e.
\begin{equation}  \label{xRaver}
	r(x)= x f_A(x,0)+(1-x) f_B(x,0)\;.
\end{equation}
In a similar way, $\delta(x)$ is the population average of the capture rates for A and B prey.

As a consequence, both $r(x)$ and $\delta(x)$ are \emph{strictly positive} for all $x \in[0,1]$.

It is also worth noting that $x(t)=0$ and $x(t)=1$ are solutions to the first of \eqref{xpyBW}. It follows, by uniqueness of solutions, that for any solution of \eqref{xpyBW}, $x(t)$ will be in $(0,1)$ if the initial condition $x(0)$ is in $(0,1)$.

In Mathematical Biology the term \emph{Lotka-Volterra models} refers not only to the predator-prey models we have already seen here, but, more generally, to systems of ODEs of the form
\[x'_i = f_i(x_1, \dots,  x_n) x_i\;,\]
$i=1, 2, \dots, n$, i.e. Kolmogorov systems \cite{frewal}, where the $f_i$ functions are \emph{polynomials of degree at most 1}. Thus, technically speaking, although the LVLPP in \eqref{modlvlpp} is a Lotka-Volterra model, its frequency-dependent version \emph{is not} such. However, as we will see, knowledge of the strict LVLPP model will be useful for proving results on the BW model. This is why Appendix \ref{app:lvlpp} is devoted to the LVLPP model.

\subsection{The predators' famine theorem} \label{sub:fam}
As in the LVLPP model, see Theorem \ref{thlvlpp}, if $K$ is too small, predators will starve due to lack of prey. Although we may have more precise bounds for some of the dynamic scenarios, a general result for predators' famine can be proved:
\begin{theorem}[Predators' famine] \label{thfamine}
	If 
	\begin{equation} \label{faminecond}
		K \, \max_{x \in [0,1]} r(x) < \min_{x\in [0,1]}\frac{1}{\delta(x)}\;,
	\end{equation} predators will go extinct when $t \rightarrow \infty$.
\end{theorem}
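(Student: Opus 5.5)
Let $R=\max_{x\in[0,1]} r(x)$ and $\delta_{\max}=\max_{x\in[0,1]}\delta(x)$. Since $\min 1/\delta = 1/\delta_{\max}$, condition \eqref{faminecond} reads $KR\,\delta_{\max}<1$. The plan is to work with the rewritten system \eqref{xpyBW}. First we show that the total prey population is eventually below $KR+\varepsilon$ for any $\varepsilon>0$. Then we use the $Y$ equation to see that predators decay exponentially once this happens.

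\textbf{Step 1 (bounding $P$).} Since $Y\ge 0$ and $\delta(x)>0$, the second equation of \eqref{xpyBW} gives
\[P'\le \Bigl(R-\frac{P}{K}\Bigr)P .\]
The solution $P$ is nonnegative, because $P=0$ is invariant. Comparing with the logistic equation $u'=(R-u/K)u$ with $u(0)=P(0)$, whose solutions tend to $KR$ when $u(0)>0$, we get $\limsup_{t\to\infty}P(t)\le KR$. More directly: whenever $P>KR+\varepsilon$ we have $P'\le -\varepsilon P/K\le -\varepsilon(KR+\varepsilon)/K<0$. Hence $P$ enters $[0,KR+\varepsilon]$ in finite time, say by time $T$, and cannot leave it afterwards. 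This comparison also shows that solutions are bounded in $P$.

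\textbf{Step 2 (predator decay).} Fix $\varepsilon>0$ so small that $\theta:=(KR+\varepsilon)\delta_{\max}<1$, which is possible by \eqref{faminecond}. For $t\ge T$ the third equation gives
\[Y'=\beta(\delta(x)P-1)Y\le \beta(\theta-1)Y .\]
Since $Y\ge0$, Gronwall's inequality yields $Y(t)\le Y(T)e^{-\beta(1-\theta)(t-T)}\to0$. So predators go extinct exponentially fast.

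\textbf{Obstacle.} The main technical point is not the estimates but justifying that solutions exist for all $t\ge0$ and that the comparison arguments apply, given the singularity of \eqref{BWmodel} at $A=B=0$. I would handle it by noting that \eqref{xpyBW} is smooth in $(x,P,Y)$ on $[0,1]\times[0,\infty)^2$. The variable $x$ stays in $[0,1]$ by invariance, as noted after \eqref{xpyBW}. Step 1 bounds $P$. A bound on $Y$ follows from $Y'\le\beta(\delta_{\max}P-1)Y$ with $P$ bounded, which gives at most exponential growth on finite intervals. Hence no finite-time blow-up occurs, and the estimates above hold for all time. The degenerate case $P(0)=0$ is trivial, since then $Y'=-\beta Y$.
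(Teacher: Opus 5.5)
Your proof is correct and follows essentially the same route as the paper. Both arguments use the bound $P'\le (R-P/K)P$ (the paper phrases it through ``instant nullclines'') to push $P$ into a forward-invariant region where $\delta(x)P<1$, then finish with Gronwall's inequality for $Y$ after checking that solutions exist for all $t\ge 0$.

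The one real difference is the threshold. You trap $P$ below $KR+\varepsilon$, which gives $Y$ a uniform exponential decay rate and makes $Y\to 0$ immediate. The paper uses the line $P=1/D$ with $D=\max\delta$, which directly yields only that $Y$ is decreasing, so an extra argument is needed to show its limit is $0$.
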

\begin{proof}
	At any instant $t$ we may talk of the ``instant nullclines", which are the straight lines $N_1(t): r(x(t)) - \frac{P}{K} - \delta(x(t)) Y=0$ and $N_2(t): P=\frac{1}{\delta(x(t))}$ in the $PY$-plane. Under condition \eqref{faminecond}, we see that the nullclines never intersect in the first quadrant, for any value of $t$. We are thus, for any instant $t$ in which the solution is well-defined, in the situation illustrated by the left panel in Fig. \ref{fig:lvlpporb}. The proof is then a modification of the part of the proof of Theorem \ref{thlvlpp} referring to this left panel.
	
	Let $R=\max_{x \in [0,1]}r(x)$ and $D=\max_{x\in[0,1]} \delta(x)$. Condition \eqref{faminecond} implies that the nullcline $N_1(t)$ always stands to the left of the line $P=\frac{1}{D}$ in the first quadrant. 
	
	Let $x_0 \in [0,1]$, $Y_0>0$ and $P_0>\frac{1}{D}$ and consider the solution of \eqref{xpyBW} with initial condition $(x_0,P_0,Y_0)$. While the projection onto the $PY$-plane of the orbit of this solution lies in the region to the right of $P=\frac{1}{D}$, we have
	\[Y'(t) \leq \beta(-1+ D P_0)Y\]
	and
	\[P'(t)\leq (R- \frac{1}{D K})P\;.\]
	As in the corresponding part of the proof of Theorem \ref{thlvlpp}, we may use the Gronwall inequality and show that $Y(t)$ does not tend to $\infty$ in finite time and that $P(t)$ decays exponentially, forcing the orbit to enter the region to the left of $P=\frac{1}{D}$. In this region $Y$ is decreasing, bounded below by 0 and $P$ cannot be larger than $1/D$. Thus, the solution must be defined up to infinite time and $Y(t)$ must tend to 0 as $t \rightarrow \infty$.
\end{proof}

\subsection{Equilibria with one or two species} \label{sub:equil12}
From now on, we will present many results on equilibrium points of the BW model and their stability. We start by stating a general notation for these equilibria: we will denote an equilibrium solution by the names of the species types present in that solution. For example, we will talk of the A equilibrium, AB equilibrium, ABY equilibria, and so on.

As the only exception, notice that if $A=B=Y=0$, the right-hand sides of all equations in \eqref{BWmodel} vanish. This means that $(A,B,Y)=(0,0,0)$ is an equilibrium solution of the BW model, but it is an equilibrium with no species present. The reader may easily check that it is always unstable, i.e. solutions of the BW model generically do not lead to extinction of the three species. For this reason, the $(0,0,0)$ equilibrium will not be important in our further analysis. 

Before talking about the other equilibria, another convention will be useful:
\begin{definition}
	An equilibrium solution for the BW model will be called \emph{positive} if the population of all species that should be present in that type of equilibrium is strictly positive and the remaining species have non-negative populations.
\end{definition}

Observe that the above definition allows for zero population for some species in some positive equilibria. For example, a positive AY equilibrium solution must have positive coordinates for A and Y species, but have 0 coordinate for the B species. 

The A equilibrium can be found by putting $B=Y=0$ in \eqref{BWmodel}. The right-hand side of the first equation will be 0 if $A=Ka$. The point
\[(A,B,Y)=(Ka,0,0)\]
is the \emph{A equilibrium} of the BW model. It may also be written in $(x,P,Y)$ coordinates as 
\[(x,P,Y)=(1,Ka,0)\;.\]
The reader may notice that this is the only equilibrium with only A individuals and that it is always positive.

Analogously, the model always admits a single positive \emph{B equilibrium} given by
\[(A,B,Y)=(0,Kd,0) \;\;\; \mathrm{or}\;\;\; (x,P,Y)=(0,Kd,0)\;.\]

The A and B equilibria may be stable or unstable. Indeed, when there are no predators, the BW model reduces to the standard replicator equation, and the A and B equilibria may be stable or unstable in this dynamics, depending on the scenario in Table \ref{tab:repsce}.

Still with $Y=0$, the BW model may admit an \emph{AB equilibrium}. It can be found more easily in $(x,P,Y)$ coordinates. Putting $Y=0$, $x \neq 0$, $x \neq 1$ in \eqref{xpyBW}, we have $x'=0$ if 
\[f_A(x,0)=f_B(x,0)\;.\]
Solving this equation, we get $x=x_R$, where
\begin{equation}  \label{defxR}
	x_R= \frac{d-b}{a-c+d-b}\;.
\end{equation}
It can be seen that $x_R \in(0,1)$, i.e. both A and B individuals are present, if and only if 
\[a>c  \;\;\; \mathrm{and} \;\;\; d>b\;,\]
the codominance scenario in Table \ref{tab:repsce}, or
\[a<c  \;\;\; \mathrm{and} \;\;\; d<b\;,\]
the coexistence scenario in Table \ref{tab:repsce}.

By the second equation in \eqref{xpyBW}, the value of $P$ in the AB equilibrium is $Kr(x_R)$. By \eqref{xRaver},
\[r(x_R)= x_R f_A(x_R,0)+(1-x_R) f_B(x_R,0)\;,\]
which is more simply given by either $f_A(x_R,0)$ or $f_B(x_R,0)$. Thus, the coordinates of the AB equilibrium are
\[(x,P,Y)=(x_R,K f_A(x_R,0),0)\] or
\[(A,B,Y)=(K x_R f_A(x_R,0),K (1-x_R) f_A(x_R,0),0)\;.\]
Again, because there are no predators in the AB equilibrium, we know, from results on the standard replicator equation, that the AB equilibrium may be unstable, and that, restricted to the AB plane, it will be stable only in the coexistence scenario of Table \ref{tab:repsce}. 

We already know that when B prey are absent, the BW model reduces to the LVLPP model. Therefore, we have an \emph{AY equilibrium} given by
\[(A,B,Y)= (\frac{1}{a'}, 0, \frac{1}{a'}(a- \frac{1}{Ka'}))\;\;\;\mathrm{or}\;\;\; (x,P,Y)= (1,\frac{1}{a'},  \frac{1}{a'}(a- \frac{1}{Ka'}))\;. \]
Observe that the AY equilibrium is positive only if
\begin{equation} \label{posAY}
	K > \frac{1}{a a'}\;,
\end{equation}
in accordance with the fact, see Appendix \ref{app:lvlpp}, that the LVLPP model does not have a positive coexistence equilibrium if $K$ is too small.

Analogously, we also have a BY equilibrium given by
\[(A,B,Y)= (0,\frac{1}{d'}, \frac{1}{d'}(d- \frac{1}{Kd'}))\;\;\;\mathrm{or}\;\;\; (x,P,Y)= (0,\frac{1}{d'},  \frac{1}{d'}(d- \frac{1}{Kd'}))\;, \]
with
\begin{equation} \label{posBY}
	K > \frac{1}{d d'}\;
\end{equation}
as positivity condition.

From our knowledge of the LVLPP model, whenever the AY equilibrium is positive, it is stable \emph{when restricted to the $AY$ plane}. The analog holds for the BY equilibrium.

To derive conditions for stability or instability of AY, BY and AB equilibria, we recall that the term in parentheses in each of the equations \eqref{BWmodel} may be interpreted as the \emph{per capita} growth rate of the corresponding species, taking into account births and deaths, but also competition among prey and the (positive or negative) effect of predation. 

In order to discern stability or instability of the AY equilibrium, not only restricted to the $AY$ plane, we must look at the sign of the \emph{per capita} growth rate of species B \emph{calculated at the AY equilibrium}. If positive, then the AY equilibrium is unstable; if negative, the AY equilibrium is stable. This result and its analog for the BY equilibrium were already established by Braga and Wardil \cite{bw}, and we derive them here for completeness.

The per capita growth rate of B, see \eqref{BWmodel}, is
\[\frac{cA + dB}{A + B} - \frac{A + B}{K} - \frac{c'A + d'B}{A + B}Y\;.\]
Substituting into the above expression the coordinates of the AY equilibrium, its stability condition is
\[c- \frac{1}{Ka'}- \frac{c'}{a'}(a- \frac{1}{Ka'})<0\;.\]
Multiplying by $a'$, we recover the result presented in \cite{bw} that the AY equilibrium is asymptotically stable whenever it is positive and also
\begin{equation} \label{AYstabil}
	a'c-ac'-\frac{1}{K}(1-\frac{c'}{a'})<0\;.
\end{equation}
If the inequality in \eqref{AYstabil} is reversed, then a positive AY equilibrium is unstable.

Also appearing in BW, we have analogous conditions, with analogous derivation, for a positive BY equilibrium: it is asymptotically stable if positive and
\begin{equation} \label{BYstabil}
	d'b-db'-\frac{1}{K}(1-\frac{b'}{d'})<0
\end{equation}
and unstable if the inequality is reversed.

Although conditions for the stability or instability of the AB  equilibrium were not given in \cite{bw}, we may use the same ideas. A small difference is that the AB equilibrium is not necessarily stable when positive and restricted to the AB plane. In order for the AB equilibrium to be positive, we must have either $a>c$ and $d>b$ (codominance, see Table \ref{tab:repsce}), or $a<c$ and $d<b$ (coexistence). In the first case, the AB equilibrium is unstable. In the second case, it is asymptotically stable restricted to the AB plane.

Thus, the AB equilibrium is asymptotically stable if $a<c$, $d<b$ and also the per capita growth rate of Y is negative when calculated at the AB equilibrium. This leads to the additional condition for stability of the AB equilibrium
\begin{equation} \label{ABstabil}
	K r(x_R) \delta(x_R)-1<0\;.
\end{equation}
If, on the contrary, either $a>c$ and $d>b$, or  $a<c$, $d<b$ and the inequality is reversed in \eqref{ABstabil}, then the AB equilibrium is unstable.

Although the conditions above seem ready to use, we will still rewrite them in a more useful form in Theorems \ref{theogtAY}, \ref{theogtBY} and \ref{theoABstab} ahead. Before that, we must still learn more about ABY equilibria.

\subsection{ABY equilibria}
It is difficult to find ABY equilibria by using the original form of the BW model's equations, \eqref{BWmodel}. Instead, it is a simple task in the reformulation \eqref{xpyBW}.

In fact, equating to 0 the right-hand side of the first of \eqref{xpyBW}, disregarding the solutions $x=0$ and $x=1$, which do not lead to equilibria with all three species, we see that an ABY equilibrium must satisfy
\[Y=m(x)\;,\]
where
\begin{equation} \label{defm}
	m(x) = \frac{(a-c)x+(b-d)(1-x)}{(a'-c')x+(b'-d')(1-x)}\;.
\end{equation}

Equating to 0 the right-hand sides of the last two equations in \eqref{xpyBW} and replacing $Y$ by $m(x)$, we get $P = K(r(x) -\delta(x)m(x))$ and $P = \frac{1}{\delta(x)}$. Equating these two expressions for $P$, we see that the $x$ coordinate of an ABY equilibrium must satisfy
\begin{equation} \label{ABYeq}
	K F(x)= \frac{1}{\delta(x)}\;,
\end{equation}
where
\begin{equation} \label{defF}
	F(x)= r(x) -\delta(x)m(x)\;.  
\end{equation}

The following theorem summarizes these results:
\begin{theorem}  \label{theoABYeq}
	Let $x^* \in (0,1)$ be such that
	\[m(x^*)>0\;\;\;\mathrm{and}\;\;\;K F(x^*)= \frac{1}{\delta(x^*)}\]
	for some positive $K$.
	
	Then $(x,P,Y)=(x^*,\frac{1}{\delta(x^*)},m(x^*))$ is a positive ABY equilibrium.
\end{theorem}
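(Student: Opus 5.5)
The plan is to verify directly that the proposed point annihilates the right-hand side of \eqref{xpyBW}, and then to check positivity in the sense of the Definition. Since $x^*\in(0,1)$, the change of variables $A=xP$, $B=(1-x)P$ is a bijection between points with $P>0$, $x\in(0,1)$ and points with $A>0$, $B>0$. Equilibria of \eqref{xpyBW} with $P>0$ therefore correspond to equilibria of \eqref{BWmodel}, because $A'=x'P+xP'$ and $B'=-x'P+(1-x)P'$. So it suffices to work in $(x,P,Y)$ coordinates.

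The three equations are checked in turn.

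First, $x'$. From \eqref{A fit} and \eqref{B fit},
\[f_A(x,Y)-f_B(x,Y)=\big[(a-c)x+(b-d)(1-x)\big]-Y\big[(a'-c')x+(b'-d')(1-x)\big].\]
With $Y=m(x^*)$ this vanishes by the definition \eqref{defm}. Implicitly, the denominator of $m$ is nonzero at $x^*$, since otherwise $m(x^*)$ would not be defined; this is the only delicate point, and the hypothesis $m(x^*)>0$ presupposes it.

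Second, $Y'$. At $P=1/\delta(x^*)$ we have $-\beta+\beta\delta(x^*)P=0$. This uses $\delta(x^*)>0$, which holds because $\delta$ is strictly positive on $[0,1]$.

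Third, $P'$. The factor is
\[r(x^*)-\frac{1}{K\delta(x^*)}-\delta(x^*)m(x^*)=F(x^*)-\frac{1}{K\delta(x^*)},\]
using \eqref{defF}. This is zero precisely because $KF(x^*)=1/\delta(x^*)$ and $K>0$.

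Finally, positivity. We have $x^*\in(0,1)$ and $P=1/\delta(x^*)>0$, so both $A=x^*/\delta(x^*)$ and $B=(1-x^*)/\delta(x^*)$ are strictly positive. The predator coordinate $Y=m(x^*)$ is positive by hypothesis. Hence all three species are strictly present, and the point is a positive ABY equilibrium.

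No real obstacle is expected. The only care needed is in transferring the equilibrium back to the $(A,B,Y)$ variables and in noting that $m(x^*)$ is well defined.
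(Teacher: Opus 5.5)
Your proposal is correct and is essentially the paper's argument. The paper gets $Y=m(x)$ from $x'=0$, and $P=K\bigl(r(x)-\delta(x)m(x)\bigr)=1/\delta(x)$ from $P'=Y'=0$, in \eqref{xpyBW}. Your direct substitution runs those reversible steps in the sufficiency direction, and also spells out two points the paper leaves implicit: the transfer back to $(A,B,Y)$ coordinates, and the well-definedness of $m(x^*)$.
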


It is necessary, for a positive ABY equilibrium, that $x \in (0,1)$ and that $m(x)$ is positive. In Table \ref{tab:posm}, a consequence of elementary inequality solving, we display -- in each scenario -- the subsets of $(0,1)$ such that $m(x)$ is positive. In order to understand the table, in analogy with $x_R$ defined in \eqref{defxR}, we define $x_P$ as the point in which the denominator in the right-hand side of \eqref{defm} vanishes, i.e. 
\begin{equation}  \label{defxP}
	x_P= \frac{d'-b'}{a'-c'+d'-b'}\;.
\end{equation}
Of course, $x_P \in (0,1)$ if and only if $a'>c'$ and $d'>b'$ (coexistence in predation) or $a'<c'$ and $d'<b'$ (codominance in predation). For convenience, we also define
\begin{equation}  \label{defx1x2}
	x_1 = \min\{x_R,x_P\}  \;\;\;\mathrm{and} \;\;\; x_2 = \max\{x_R,x_P\}\;.
\end{equation}

\begin{table}     
	\begin{tabular}{ccccc}
		\hline\noalign{\smallskip}
		&$\begin{array}{c}
			\textrm{A dominance}\\
			a'<c'\\
			b'<d'
		\end{array}$&$\begin{array}{c}
			\textrm{Codominance}\\
			a'<c'\\
			b'>d'
		\end{array}$ &$\begin{array}{c}
			\textrm{Coexistence}\\
			a'>c'\\
			b'<d'
		\end{array}$ & $\begin{array}{c}
			\textrm{B dominance}\\
			a'>c'\\
			b'>d'
		\end{array}$ \\
		
		\noalign{\smallskip}\hline\noalign{\smallskip}
		$\begin{array}{c}
			\textrm{A dominance}\\
			a>c\\
			b>d
		\end{array}$ &$\varnothing$&$(0,x_P)$&$(x_P,1)$&$(0,1)$ \\
		\noalign{\smallskip}\hline
		$\begin{array}{c}
			\textrm{Codominance}\\
			a>c\\
			b<d
		\end{array}$ &$(0,x_R)$&$(x_1,x_2)$&$(0,x_1)\cup (x_2,1)$&$(x_R,1)$ \\
		\noalign{\smallskip}\hline
		$\begin{array}{c}
			\textrm{Coexistence}\\
			a<c\\
			b>d
		\end{array}$ &$(x_R,1)$&$(0,x_1)\cup (x_2,1)$&$(x_1,x_2)$&$(0,x_R)$ \\
		\noalign{\smallskip}\hline
		$\begin{array}{c}
			\textrm{B dominance}\\
			a<c\\
			b<d
		\end{array}$ &$(0,1)$&$(x_P,1)$&$(0,x_P)$&$\varnothing$ \\
		\noalign{\smallskip}\hline \smallskip
	\end{tabular}
	\caption{\label{tab:posm} The subsets of $(0,1)$, for each of the 16 possible evolutionary scenarios, in which $m(x)>0$. $x_R$ is defined in \eqref{defxR}, $x_P$ in \eqref{defxP}, and $x_1,x_2$ in \eqref{defx1x2}.}
\end{table}

An interesting remark here is that in the two scenarios (1 and 16) in which the same prey type dominates both in reproduction and in predation, positive ABY equilibria cannot exist, because $m(x)<0$ for all $x\in [0,1]$. We can say more on the dynamics in these two scenarios by noticing that the inequalities that characterize them show that in scenario 1 we have $f_A(x,Y)>f_B(x,Y)$ for all $Y\geq 0$, $x \in [0,1]$, and in scenario 16 the reverse inequality holds. In the first case $x$ is always increasing, see \eqref{repl}, whereas it is decreasing in the second case. If the solutions of the ODEs do not tend to infinity in finite time, we have, respectively, extinction of B prey, or extinction of A prey, and the BW model becomes asymptotically the LVLPP model. For large values of $K$, see Theorem \ref{thlvlpp}, we also have survival of the predators. But in both scenarios 1 and 16 it is impossible to have long-term coexistence of the three species, either in equilibrium or otherwise.

Theorem \ref{theoABYeq} teaches us how to calculate ABY equilibria, but we still lack general results, such as those at the end of Subsection \ref{sub:equil12}, about stability or instability of the ABY equilibria. Despite that, in numerical calculations, we may determine the stability of ABY equilibria by evaluating the eigenvalues of the Jacobian matrix at the equilibrium. As will be seen in examples in the next section, for given values of the parameters, we may have no ABY equilibrium, or a single, or multiple ABY equilibria, stable and unstable.

\section{Examples of ABY equilibria}  \label{sec:ABYeq}
As already mentioned, for given values for the parameters, calculating ABY equilibria is in principle straightforward: solve equation \eqref{ABYeq}. By eliminating all denominators, solving \eqref{ABYeq} amounts to finding zeros of a 5th degree polynomial. We must therefore use numerical methods to approximate solutions, or graphical methods to visualize the number of real solutions in $(0,1)$. As examples will show, we may have many different shapes for the graphs in $[0,1]$ of $F(x)$ and $1/\delta(x)$, so even the number of biologically meaningful solutions will depend on the chosen values for the parameters.

Reminding the reader that $\delta(x)$ is strictly positive in $[0,1]$, we may view \eqref{ABYeq} as telling us that for each $x \in (0,1)$ such that $m(x)>0$ and $F(x)>0$, we determine a unique value
\begin{equation}  \label{defkappa}
	\kappa(x)= \frac{1}{\delta(x)F(x)}
\end{equation}
such that, if $K=\kappa(x)$, we have a positive ABY equilibrium with coordinate $x$. Fixing values for all parameters, except $K$, we can plot a graph of $\kappa(x)$. The number of positive ABY equilibria for a given value of $K$ is the number of intersections of the horizontal line with ordinate $K$ with the graph of $\kappa(x)$ for $x\in(0,1)$, $x$ such that $m(x)>0$, see Table \ref{tab:posm}. We will present a few examples that we found interesting and representative, and that will be further explored ahead. The reader is invited to try others for himself/herself. 

In particular, many of the following examples belong to scenario 4 in Table \ref{tab:reppresce}. As we mentioned in the very beginning of this paper, a curiosity about peacocks was among the motivations of Braga and Wardil \cite{bw} when proposing their model. If prey species A stands for real peacocks, with attractive tails which make them dominant in reproduction, but dominated in predation, then scenario 4 might be termed the \emph{peacock scenario}. Prey species B is then an imaginary type of peacock with less attractive tails. Of course, scenario 13 is analogous to 4, with B prey standing for real peacocks. Scenario 4 is not only close to the original motivation -- the peacock question posed by Braga and Wardil -- but is also very rich from a mathematical standpoint. Another advantage of scenario 4, see Table \ref{tab:posm}, is that we do not need to bother about the positivity of $m$, which holds in the whole interval.

In all examples in this paper we will use $\beta=2$. In fact, varying $\beta$ will increase or decrease time scales, but will not alter any of the equilibria.

Just to briefly mention one possibility, in our first example we choose
\begin{equation}
	M_R= \begin{pmatrix}
		1&1.2\\
		0.8&0.6
	\end{pmatrix} \;\;\;\mathrm{and}\;\;\;M_P= \begin{pmatrix}
		0.45&0.55\\
		0.4&0.45
	\end{pmatrix} \;.
\end{equation}
We are in the peacock scenario 4. By plotting a graph, the reader may see that with the above choices we have $F(x)<0$ for all $x \in [0,1]$. As a consequence, there are no positive ABY equilibria for any positive value of $K$. This means that with the above parameter choice there is no possibility of coexistence in equilibrium, either stable or unstable, of the two peacock types.

Let us now proceed to more interesting examples.
\subsection{Scenario 4, $F$ changes sign} \label{sub:ex2}
We choose
\begin{equation}
	M_R= \begin{pmatrix}
		3.9&0.9\\
		0.8&0.7
	\end{pmatrix} \;\;\;\mathrm{and}\;\;\;M_P= \begin{pmatrix}
		1.5&1.5\\
		0.75&0.4
	\end{pmatrix} \;.
\end{equation}

With these values, $F$ has a single root $x^* \approx 0.39$ in $[0,1]$, being positive for $0<x<x^*$ and negative for $x^*<x<1$. As a consequence, $\kappa(x) \rightarrow \infty$ when $x \nearrow x^*$. As seen in its graph, Fig. \ref{fig:kappaex}(a), $\kappa$ has also a local minimum in $\tilde{x}\approx 0.10$. With this information, we have
\begin{itemize}
	\item no positive ABY equilibrium for $K \in (0,\kappa(\tilde{x}))$;
	\item two positive ABY equilibria for $K \in (\kappa(\tilde{x}), \kappa(0))$;
	\item one positive ABY equilibrium for $K  \in (\kappa(0), \infty)$.
\end{itemize}
\begin{figure}
	\centering
	\includegraphics[width=0.9\linewidth]{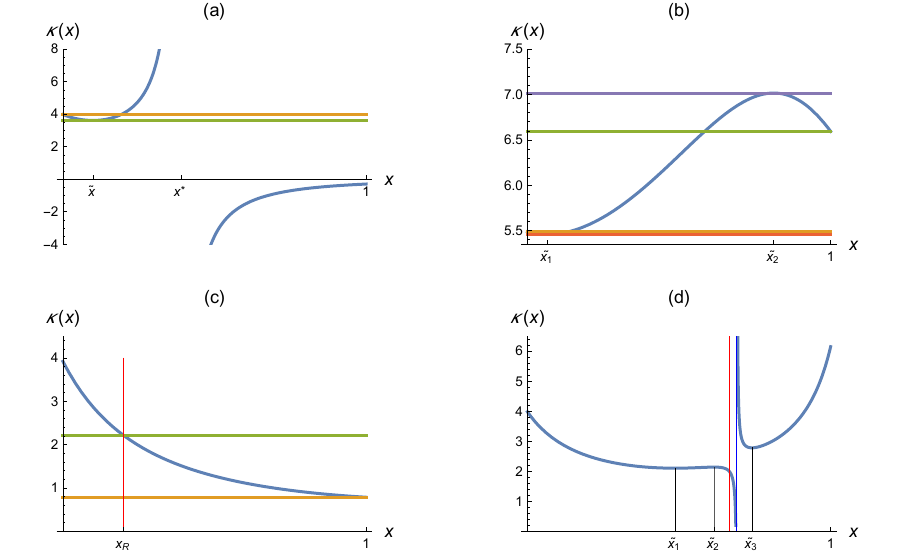}
	\caption{Graphs of the function $\kappa(x)$ in the examples of subsections \ref{sub:ex2} to \ref{sub:ex5}. We use $\beta=2$ and the values of the remaining parameters and the meaning of all symbols are given in the text of each subsection.}
	\label{fig:kappaex}
\end{figure}

We numerically determine the stability or instability of the ABY equilibria by evaluating the eigenvalues of the Jacobian matrix of the system at each equilibrium. The numerical results indicate that for $K \in (\kappa(\tilde{x}), \kappa(0))$  the positive ABY equilibrium with coordinate $x<\tilde{x}$ is asymptotically stable, and the other, with $x>\tilde{x}$, is unstable. The only positive ABY equilibrium for $K  \in (\kappa(0), \infty)$ is numerically unstable.

Biologically, in this example we have a small range in $K$ such that a stable coexistence equilibrium for the two peacock types and predators exists.

\subsection{Scenario 4, $F$ is positive in $[0,1]$.}\label{sub:ex3}
The next parameter choice, still in scenario 4, is
\begin{equation}
	M_R= \begin{pmatrix}
		3.77&0.9\\
		1.2&0.54
	\end{pmatrix} \;\;\;\mathrm{and}\;\;\;M_P= \begin{pmatrix}
		1.565&1.5\\
		0.47&0.535
	\end{pmatrix} \;.
\end{equation}

The reader may see that $F$ is positive in all $[0,1]$. This means that $\kappa$ is also positive and we have a positive ABY equilibrium at each $x \in (0,1)$. The graph of $\kappa(x)$ is shown in Fig. \ref{fig:kappaex}(b). The function has a minimum at $\tilde{x}_1 \approx 0.07$ and a maximum at $\tilde{x}_2 \approx 0.81$. The results are the following, in which stability or instability was decided by numerical evaluation of the eigenvalues:
\begin{itemize}
	\item no positive ABY equilibrium for $K \in (0,\kappa(\tilde{x}_1))$;
	\item two positive ABY equilibria for $K \in (\kappa(\tilde{x}_1), \kappa(0))$; the one in $(0,\tilde{x}_1)$ is asymptotically stable, and the other, in $(\tilde{x}_1, \tilde{x}_2)$, is unstable;
	\item one positive ABY equilibrium for $K  \in (\kappa(0), \kappa(1))$; it is unstable;
	\item two positive ABY equilibria for $K \in (\kappa(1), \kappa(\tilde{x}_2))$; the one in $(\tilde{x}_1, \tilde{x}_2)$ is unstable, and the other, in $(\tilde{x}_2,1)$, is asymptotically stable when $K$ is close to $\kappa(1)$, but becomes unstable at $K \approx 6.8$;
	\item no positive ABY equilibrium for $K \in (\kappa(\tilde{x}_2), \infty)$.
\end{itemize}

\subsection{Scenario 8} \label{sub:ex4}
In scenario 8 we have $x_R \in(0,1)$, and, according to Table \ref{tab:posm}, $m(x)>0$ for $x \in (x_R,1)$.

We choose
\begin{equation}
	M_R= \begin{pmatrix}
		3.2&0.75\\
		3&0.8
	\end{pmatrix} \;\;\;\mathrm{and}\;\;\;M_P= \begin{pmatrix}
		0.5&0.6\\
		0.35&0.3
	\end{pmatrix} \;.
\end{equation}
A graph of $\kappa(x)$ is shown in Fig. \ref{fig:kappaex}(c). By the graph, it can be seen that we have a single positive ABY equilibrium for $K \in (\kappa(1), \kappa(x_R))$ and no positive ABY equilibrium either for $K \in(0,\kappa(1))$ or for $K> \kappa(x_R)$. The positive ABY equilibrium is numerically seen to be asymptotically stable in a large part of its existence range. But it becomes unstable when $K \approx 1.66$, close to $\kappa(x_R)$.  

\subsection{Scenario 10}\label{sub:ex5}
In this scenario, we have both $x_R$ and $x_P$ in $(0,1)$. We choose 
\begin{equation}
	M_R= \begin{pmatrix}
		0.7&0.9\\
		0.8&0.7
	\end{pmatrix} \;\;\;\mathrm{and}\;\;\;M_P= \begin{pmatrix}
		0.25&1.5\\
		0.75&0.4
	\end{pmatrix} \;.
\end{equation}
In Fig. \ref{fig:kappaex}(d) we show the graph of $\kappa(x)$ along with the vertical lines $x=x_R$ (in red) and $x=x_P$ (in blue). According to Table \ref{tab:posm}, the set of values for $x$ such that $m(x)>0$ excludes the region between $x_R$ and $x_P$. In the allowed region $\kappa$ has three local extrema $\tilde{x}_i$, $i=1, 2, 3$, also indicated in the figure. This creates a rich situation with several intervals for $K$ in which the number of positive ABY equilibria varies. We refrain from indicating all intervals, but in some of them we have as many as three positive ABY equilibria for the same value of $K$, two being asymptotically stable and one unstable.

\subsection{Some conclusions}
All four panels in Fig. \ref{fig:kappaex} tell different versions of some ``typical stories" of creation and annihilation of positive ABY equilibria. We may think of $K$, the carrying capacity control parameter, as starting with value 0, and being gradually increased up to $\infty$. As $K$ passes through some threshold values, creations or annihilations of positive ABY equilibria occur; sometimes a pair of positive ABY equilibria is simultaneously created or annihilated. Increasing the value of $K$ has an important biological interpretation: it means increasing carrying capacities of both prey types, hence providing more food to predators, thus enhancing the importance of the predation matrix parameters with respect to the reproduction matrix parameters.

As an example, consider the graph in Fig. \ref{fig:kappaex}(b). When $K$ starts from 0, there is no positive ABY equilibrium. When $K$ passes through the local minimum value $\kappa(\tilde{x}_1)$, a pair of positive ABY equilibria is created. By increasing $K$, the $x$ coordinate of one of these equilibria decreases, and the $x$ coordinate of the other increases. Let $\overline{x}_1$ be the location of the $x$-decreasing equilibrium and $\overline{x}_2$ be the location of the $x$-increasing equilibrium. Further increasing $K$, when $K=\kappa(0)$, $\overline{x}_1$ reaches 0 and, due to the fact that $\kappa'(0)<0$ in that example, the corresponding ABY equilibrium becomes non-positive for $K>\kappa(0)$; in other words, a positive ABY equilibrium is annihilated at $x=0$. Meanwhile $\overline{x}_2$ continues to increase. When $K=\kappa(1)$, a new positive ABY equilibrium appears at $x=1$ and its coordinate $\overline{x}_3$ decreases. More precisely, a non-positive ABY equilibrium became positive when $K=\kappa(1)$. When $K$ passes through the local maximum $\kappa(\tilde{x}_2)$ the pair of equilibria with coordinates $\overline{x}_2$ and $\overline{x}_3$ are simultaneously annihilated. For larger values of $K$, there will be no other positive ABY equilibria.

The other panels in Fig. \ref{fig:kappaex} tell similar stories. Panels (c) and (d) in Fig. \ref{fig:kappaex} also show that the creation or annihilation of a single positive ABY equilibrium may occur not only in $x=0$ or $x=1$, but also in $x=x_R$.

\section{Stability of the AY, BY and AB equilibria} \label{sec:stabAYBYAB}
We had promised to return to the conditions of stability and instability of the AY, BY and AB equilibria after learning more about the ABY equilibria. The important fact is that, surprisingly,  conditions such as \eqref{AYstabil}, \eqref{BYstabil} and \eqref{ABstabil} provide a link between the change in the stability status of these equilibria and the creation/annihilation of a positive ABY equilibrium.

In fact, suppose at first that a positive ABY equilibrium is annihilated or created at $x=1$. Strictly speaking, coordinate $x$ of a positive, biologically meaningful, ABY equilibrium must be a solution in $(0,1)$ of \eqref{ABYeq}. But the creation or annihilation at $x=1$ of an ABY equilibrium means that $x=1$ solves \eqref{ABYeq}, i.e. $KF(1)-1/\delta(1)=0$.

Using the definitions \eqref{defF} and \eqref{defdelta}, we see that
\begin{align}
	(a'-c')(KF(1)-\frac{1}{\delta(1)})&= (a'-c') \left[K(a-\frac{a-c}{a'-c'}\, a')-\frac{1}{a'} \right]
	\nonumber\\ 
	&= K \left[ a'c-ac'-\frac{1}{K}(1-\frac{c'}{a'})\right]\;.\end{align}
This means that the stability condition \eqref{AYstabil} for the AY equilibrium can be rewritten as
\begin{equation} \label{altAYstab}
	(a'-c') (KF(1)-\frac{1}{\delta(1)}) <0
\end{equation}
and similarly for the instability condition. We see that a change in the stability status of the AY equilibrium, either from unstable to stable or the reverse, requires that $KF(1)-1/\delta(1)=0$, i.e., a positive ABY equilibrium must be created or annihilated at $x=1$. 

Similarly, by \eqref{BYstabil}, the BY equilibrium will change its stability status if an ABY equilibrium is created or annihilated at $x=0$. And, by \eqref{ABstabil}, the AB equilibrium will change from stable to unstable (and never the reverse if $K$ is being increased) only when an ABY equilibrium is created or annihilated at $x=x_R$.

As we shall see, rewriting the stability and instability conditions for the AY equilibrium will shed strong light on the game-theoretic interpretation of its stability or instability. Before stating the result, define
\begin{equation}  \label{defkac}
	K^A_C= \kappa(1)= \frac{1}{a'F(1)}\;.
\end{equation}
Notice also that we may rewrite
\begin{align}
	F(1)= & \,a-\frac{a-c}{a'-c'}\, a' \label{F1<a}\\
	=& \,\frac{c c'}{a'-c'}\, \left(\frac{a'}{c'}- \frac{a}{c}\right)\label{F1>0}\;.
\end{align}
Finally, recall the positivity condition \eqref{posAY} for the AY equilibrium.

Then,
\begin{theorem} \label{theogtAY}
	Suppose all parameters in the BW model are positive and $a \neq c$, $a' \neq c'$. 
	\begin{enumerate}
		\item If $\frac{a'}{c'}>\frac{a}{c}>1$, then $K^A_C>\frac{1}{a a'}$, and the AY equilibrium is asymptotically stable if $K \in (\frac{1}{a a'}, K^A_C)$ and unstable if $K \in (K^A_C, \infty)$.
		\item If either $\frac{a}{c}>\frac{a'}{c'}>1$ or $\frac{a}{c}>1>\frac{a'}{c'}$, then $K^A_C<\frac{1}{a a'}$, and the AY equilibrium is asymptotically stable whenever positive, i.e., $K \in (\frac{1}{a a'}, \infty)$.
		\item If $\frac{a'}{c'}<\frac{a}{c}<1$, then $K^A_C>\frac{1}{a a'}$, and the AY equilibrium is unstable if $K \in (\frac{1}{a a'}, K^A_C)$ and asymptotically stable if $K \in (K^A_C, \infty)$.
		\item If either $\frac{a}{c}<\frac{a'}{c'}<1$ or $\frac{a}{c}<1<\frac{a'}{c'}$, then $K^A_C<\frac{1}{a a'}$, and the AY equilibrium is unstable whenever positive, i.e., $K \in (\frac{1}{a a'}, \infty)$.
	\end{enumerate}
\end{theorem}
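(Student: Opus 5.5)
The plan is to combine the rewritten stability criterion \eqref{altAYstab} with the two expressions \eqref{F1<a} and \eqref{F1>0} for $F(1)$. Since $\delta(1)=a'$, the AY equilibrium (when positive, i.e.\ $K>\frac{1}{aa'}$ by \eqref{posAY}) is asymptotically stable iff
\[
(a'-c')\Big(KF(1)-\tfrac{1}{a'}\Big)<0,
\]
and it is unstable if the reverse strict inequality holds. The whole proof reduces to determining three signs in each case: the sign of $a'-c'$, the sign of $F(1)$, and the sign of $KF(1)-\frac{1}{a'}$ on $(\frac{1}{aa'},\infty)$.

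The two auxiliary facts I would establish first are the following.
\begin{itemize}
\item[(i)] By \eqref{F1>0}, $F(1)$ has the sign of $\frac{1}{a'-c'}\big(\frac{a'}{c'}-\frac{a}{c}\big)$.
\item[(ii)] When $F(1)>0$, we have $K^A_C=\frac{1}{a'F(1)}>0$ and $KF(1)-\frac{1}{a'}=F(1)(K-K^A_C)$. Moreover, $K^A_C>\frac{1}{aa'}$ iff $F(1)<a$, which by \eqref{F1<a} holds iff $\frac{a-c}{a'-c'}>0$, i.e.\ iff $a-c$ and $a'-c'$ have the same sign.
\end{itemize}
When $F(1)<0$, $K^A_C<0<\frac{1}{aa'}$ trivially, and $KF(1)-\frac{1}{a'}<0$ for every $K>0$.

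The cases then go as follows.

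In case 1, $a'>c'$ and $a>c$. By (i), $F(1)>0$; by (ii), $K^A_C>\frac{1}{aa'}$. The criterion becomes $K<K^A_C$ for stability and $K>K^A_C$ for instability.

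In case 2 with $\frac{a}{c}>\frac{a'}{c'}>1$, we have $a'-c'>0$ and $F(1)<0$. Hence the product $(a'-c')\big(KF(1)-\frac{1}{a'}\big)$ is always negative, so the equilibrium is stable whenever positive.

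In case 2 with $\frac{a}{c}>1>\frac{a'}{c'}$, we have $a'-c'<0$ and $\frac{a'}{c'}-\frac{a}{c}<0$, so $F(1)>0$. But $a-c$ and $a'-c'$ have opposite signs, so $K^A_C<\frac{1}{aa'}$. Thus for all $K>\frac{1}{aa'}$ the factor $KF(1)-\frac{1}{a'}$ is positive, and multiplying by $a'-c'<0$ gives stability.

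Cases 3 and 4 are the mirror images of cases 1 and 2: every inequality between $\frac{a}{c}$, $\frac{a'}{c'}$ and $1$ is reversed. The same computation flips the sign of the product in each subcase, so stable and unstable are exchanged.

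I expect no real obstacle beyond bookkeeping. The only delicate point is keeping track of the sign of $a'-c'$ simultaneously in \eqref{F1>0}, in the comparison $F(1)<a$ via \eqref{F1<a}, and in the final product. To be safe, I would organise each case by first writing down the signs of $a-c$, $a'-c'$ and $\frac{a'}{c'}-\frac{a}{c}$. The hypotheses $a\neq c$ and $a'\neq c'$ guarantee that $F(1)$ and $K^A_C$ are well defined. I would also remark that equality $\frac{a'}{c'}=\frac{a}{c}$ (where $F(1)=0$) falls outside the listed cases, so no separate treatment is needed.
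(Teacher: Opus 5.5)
Your proposal is correct and follows the paper's own proof. Both read off stability from the sign of $(a'-c')\bigl(KF(1)-\frac{1}{a'}\bigr)$ in \eqref{altAYstab}, use \eqref{F1>0} for the sign of $F(1)$ and \eqref{F1<a} to compare $K^A_C$ with $\frac{1}{aa'}$, treat cases 1 and 2 explicitly, and get cases 3 and 4 by symmetry. You are slightly more careful than the paper in writing $KF(1)-\frac{1}{a'}=F(1)(K-K^A_C)$ only when $F(1)>0$ and treating $F(1)<0$ separately.
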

\begin{proof}
	The stability of the AY equilibrium is determined by inequality \eqref{altAYstab}. We may solve it for $K$. The solution is $K<K^A_C$ if $a'>c'$ and $K>K^A_C$ if $a'<c'$.
	
	Suppose now that $\frac{a'}{c'}>\frac{a}{c}>1$. Then \eqref{F1>0} shows that $F(1)>0$, and \eqref{F1<a} shows that $F(1)<a$. These facts prove that $K^A_C>\frac{1}{a a'}$. Putting together positivity and the solution of \eqref{altAYstab}, we see that the AY equilibrium is asymptotically stable if $K \in (\frac{1}{a a'}, K^A_C)$. If $K>K^A_C$, then the positivity condition is automatically satisfied and also the analog of \eqref{altAYstab} for instability.
	
	If $\frac{a}{c}>\frac{a'}{c'}>1$, \eqref{F1>0} shows that $F(1)<0$. Using this in \eqref{altAYstab} shows that it is satisfied for any positive $K$. Thus, the AY equilibrium is asymptotically stable whenever positive.
	
	If $\frac{a}{c}>1>\frac{a'}{c'}$, \eqref{F1<a} shows that $F(1)>a$, which results in $0<K_C^A<\frac{1}{a a'}$. Satisfying the positivity condition will, consequently, automatically satisfy the condition $K>K^A_C$. But now $a'<c'$, and the solution of \eqref{altAYstab} becomes $K>K^A_C$. So, satisfying only $K>\frac{1}{a a'}$ secures the asymptotic stability of the AY equilibrium.
	
	The proof of the remaining two statements is completely analogous.    
\end{proof}

We may now interpret the results of Theorem \ref{theogtAY} in the light of Game Theory. 

The condition $a>c$ means that $f_A(1,0)>f_B(1,0)$, which, by continuity, implies $f_A(x,0)>f_B(x,0)$ for $x$ close enough to 1. We may then say that $a>c$ means that \emph{A prey dominate in reproduction whenever the A fraction is large enough}. On the contrary, $a<c$ means that \emph{B prey dominate in reproduction whenever the A fraction is large enough}. 

Similarly, $a'<c'$ means that when the predator population $Y$ is large enough, then $f_A(1,Y)>f_B(1,Y)$. By continuity, this implies $f_A(x,Y)>f_B(x,Y)$ for $x$ close enough to 1 and large enough $Y$. In game-theoretic terms, $a'<c'$ means that \emph{A prey dominate in predation whenever the A fraction is large enough}. And $a'>c'$ means that \emph{B prey dominate in predation whenever the A fraction is large enough}.

If $\frac{a'}{c'}>\frac{a}{c}>1$, we may say that, when the A fraction is large enough, A prey dominate in reproduction, B prey dominate in predation, and \emph{the predation dominance is stronger than the reproduction dominance}.

The first two assertions in Theorem \ref{theogtAY} show that if A prey are dominant in reproduction when the A fraction is large, as soon as the AY equilibrium becomes positive, i.e., as soon as $K$ overcomes $\frac{1}{aa'}$, the AY equilibrium is stable. At least for $K$ slightly above the positivity threshold, the reproduction dominance of A prey prevails.

If either A prey also dominate in predation when the A fraction is large, $\frac{a}{c}>1>\frac{a'}{c'}$, or the dominance of B prey in predation when the A fraction is large is weak, $\frac{a}{c}>\frac{a'}{c'}>1$, then the AY equilibrium remains stable for all larger values of $K$. In these situations, a small fraction of B prey is not able to invade a population with A prey and predators in equilibrium. This holds for all values of $K$.

On the other hand, if, when the A fraction is large, A dominates in reproduction, B dominates in predation, and the predation dominance is strong, $\frac{a'}{c'}>\frac{a}{c}>1$, then the AY equilibrium is destabilized when $K$ is larger than $K^A_C$. In this situation, for $K>K^A_C$ even a tiny fraction of B prey can invade a population with A prey and predators in equilibrium. The predation dominance prevails only if $K$ is large and predation dominance is stronger than reproduction dominance.

The latter two assertions in Theorem \ref{theogtAY} deal with situations in which the A fraction is large and B prey dominate in reproduction. In these cases for $K$ not much larger than $\frac{1}{a a'}$, the AY equilibrium is unstable. But it will become stable for larger values of $K$ if $\frac{a'}{c'}<\frac{a}{c}<1$, i.e., A prey dominate in predation and the predation dominance is stronger. In the other cases, the AY equilibrium will remain unstable whenever positive.

We may, in analogy to \eqref{defkac}, define
\begin{equation}
	K^B_C= \kappa(0)= \frac{1}{d' F(0)}\;,
\end{equation}
and prove an analogous theorem for the BY equilibrium in the situations in which the B fraction is large enough. For clarity, we state the result here: 
\begin{theorem} \label{theogtBY}
	Suppose that all parameters in the BW model are positive and $d \neq b$, $d' \neq b'$. 
	\begin{enumerate}
		\item If $\frac{d'}{b'}>\frac{d}{b}>1$, then $K^B_C>\frac{1}{dd'}$, and the BY equilibrium is asymptotically stable if $K \in (\frac{1}{d d'}, K^B_C)$ and unstable if $K \in (K^B_C, \infty)$.
		\item If either $\frac{d}{b}>\frac{d'}{b'}>1$ or $\frac{d}{b}>1>\frac{d'}{b'}$, then $K^B_C<\frac{1}{dd'}$, and the BY equilibrium is asymptotically stable whenever positive, i.e., $K \in (\frac{1}{d d'}, \infty)$.
		\item If $\frac{d'}{b'}<\frac{d}{b}<1$, then $K^B_C>\frac{1}{dd'}$, and the BY equilibrium is unstable if $K \in (\frac{1}{d d'}, K^B_C)$ and asymptotically stable if $K \in (K^B_C, \infty)$.
		\item If either $\frac{d}{b}<\frac{d'}{b'}<1$ or $\frac{d}{b}<1<\frac{d'}{b'}$, then $K^B_C<\frac{1}{dd'},$ and the BY equilibrium is unstable whenever positive, i.e., $K \in (\frac{1}{d d'}, \infty)$.
	\end{enumerate}
\end{theorem}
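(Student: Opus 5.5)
The plan is to mirror the proof of Theorem \ref{theogtAY}, with the roles of $x=1$, $a,a',c,c'$ replaced by $x=0$, $d,d',b,b'$. The first step is to rewrite the BY stability condition \eqref{BYstabil} in terms of $F(0)$ and $\delta(0)=d'$. From \eqref{defm}, $m(0)=\frac{b-d}{b'-d'}$, so \eqref{defF} gives
\begin{align}
	F(0)= & \,d-\frac{d-b}{d'-b'}\, d' \nonumber\\
	=& \,\frac{b b'}{d'-b'}\, \left(\frac{d'}{b'}- \frac{d}{b}\right)\;. \nonumber
\end{align}
These are the analogs of \eqref{F1<a} and \eqref{F1>0}. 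A direct computation then gives
\[(d'-b')\Bigl(KF(0)-\frac{1}{d'}\Bigr)=K(bd'-db')-\Bigl(1-\frac{b'}{d'}\Bigr)=K\Bigl[d'b-db'-\frac{1}{K}\Bigl(1-\frac{b'}{d'}\Bigr)\Bigr]\;.\]
Hence, since $K>0$, a positive BY equilibrium is asymptotically stable if and only if $(d'-b')(KF(0)-1/\delta(0))<0$, and it is unstable when this inequality is reversed. This is the analog of \eqref{altAYstab}.

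Next I will solve this inequality for $K$, recalling $K^B_C=\frac{1}{d'F(0)}$. When $F(0)>0$, the stability region is $K<K^B_C$ if $d'>b'$ and $K>K^B_C$ if $d'<b'$. When $F(0)<0$, the sign of $KF(0)-1/d'$ is negative for every $K>0$, so stability holds for all $K$ exactly when $d'>b'$, and fails for all $K$ when $d'<b'$. I will then combine each case with the positivity condition \eqref{posBY}, $K>\frac{1}{dd'}$.

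The main work is in comparing $K^B_C$ with $\frac{1}{dd'}$, which is equivalent to comparing $F(0)$ with $d$ (when $F(0)>0$). This uses the two expressions for $F(0)$ above. In case 1, $\frac{d'}{b'}>\frac{d}{b}>1$ gives $d'>b'$. The second expression then yields $F(0)>0$, and the first yields $F(0)<d$, since $d>b$ and $d'>b'$. Hence $K^B_C>\frac{1}{dd'}$, and the stable and unstable intervals follow.

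In case 2 there are two subcases. If $\frac{d}{b}>\frac{d'}{b'}>1$, then $F(0)<0$ and $d'>b'$, so the BY equilibrium is stable whenever positive. Here $K^B_C<0<\frac{1}{dd'}$ trivially. If $\frac{d}{b}>1>\frac{d'}{b'}$, the first expression gives $F(0)>d$, because $\frac{d-b}{d'-b'}d'<0$. This yields $0<K^B_C<\frac{1}{dd'}$, and since $d'<b'$ stability means $K>K^B_C$, which is implied by positivity.

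Cases 3 and 4 follow by the same sign bookkeeping with all inequalities reversed. In case 3, $d'<b'$, $F(0)>0$ and $F(0)<d$, so the equilibrium is unstable below $K^B_C$ and stable above it. In case 4, either $F(0)<0$ with $d'<b'$, or $F(0)>d$ with $d'>b'$. In both subcases the equilibrium is unstable whenever positive.

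The only genuinely delicate step is keeping track of signs when dividing by $d'-b'$ and by $F(0)$. The statement ``$K^B_C<\frac{1}{dd'}$'' when $F(0)<0$ must be read as $K^B_C$ being negative, exactly as in Theorem \ref{theogtAY}.
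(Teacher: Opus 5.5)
Your proposal is correct and follows the same route as the paper. The paper proves Theorem \ref{theogtAY} by rewriting the stability condition as $(a'-c')(KF(1)-1/\delta(1))<0$, comparing $F(1)$ with $0$ and with $a$ via the two expressions for $F(1)$, and states that the BY case is analogous; your argument is exactly this with $x=0$ and $d,d',b,b'$, and your explicit handling of the $F(0)<0$ subcases (where $K^B_C<0$) is, if anything, more careful than the paper's blanket phrase ``the solution is $K<K^A_C$ if $a'>c'$''.
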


As for the AY and BY equilibria, we can now show that the left hand side of the stability condition \eqref{ABstabil} can be seen as a positive ABY equilibrium being created or annihilated at $x=x_R$. In fact, as $m(x_R)=0$, $F(x_R)=r(x_R)$. It follows that a possible change in the stability status of the AB equilibrium occurs when $K=\kappa(x_R)$, which is always positive if $x_R \in (0,1)$. This result is summarized in the theorem below; the proof is easy and left to the reader.
\begin{theorem}  \label{theoABstab}
	The AB equilibrium is asymptotically stable if $a<c$, $d<b$ and also
	\begin{equation} \label{ABstabilkappa}
		K < \kappa(x_R)\;.
	\end{equation}
	It is unstable if $a<c$, $d<b$ and 
	\begin{equation} \label{ABunstabkappa}
		K >\kappa(x_R)\;,
	\end{equation}
	or if $a>c$ and $d>b$.
\end{theorem}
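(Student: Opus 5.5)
The plan is to reduce Theorem \ref{theoABstab} to the stability criterion already derived in Subsection \ref{sub:equil12}, namely condition \eqref{ABstabil} together with the restriction to the AB plane. First I recall that the AB equilibrium in $(x,P,Y)$ coordinates is $(x_R, K r(x_R), 0)$ with $x_R\in(0,1)$, which requires either $a>c$, $d>b$ (codominance) or $a<c$, $d<b$ (coexistence). The Jacobian of \eqref{xpyBW} at this point is block triangular. The $Y$-row has only the diagonal entry $\beta(-1+\delta(x_R)P)$, because $Y'$ is proportional to $Y$ and $Y=0$. The remaining $2\times 2$ block in $(x,P)$ is itself triangular. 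Its $x$-entry is $x_R(1-x_R)\frac{d}{dx}[f_A(x,0)-f_B(x,0)]_{x=x_R}=x_R(1-x_R)(a-c+d-b)$, and $\partial x'/\partial P=0$. Its $P$-entry is $-P/K=-r(x_R)<0$. So the three eigenvalues are explicit, and I will read stability off their signs.

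The $x$-eigenvalue has the sign of $a-c+d-b$. This is positive in codominance, which gives instability regardless of $K$. It is negative in coexistence. The $P$-eigenvalue is always negative, since $r>0$ on $[0,1]$. The $Y$-eigenvalue is $\beta(K r(x_R)\delta(x_R)-1)$, which is exactly the left-hand side of \eqref{ABstabil} up to the positive factor $\beta$.

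The remaining step is to rewrite $K r(x_R)\delta(x_R)<1$ as $K<\kappa(x_R)$. The key observation is that $m(x_R)=0$, since the numerator of \eqref{defm} is $f_A(x,0)-f_B(x,0)$, which vanishes at $x_R$. Hence $F(x_R)=r(x_R)>0$, and by \eqref{defkappa} we get $\kappa(x_R)=1/(\delta(x_R)r(x_R))$, which is positive. Dividing by the positive quantity $r(x_R)\delta(x_R)$ turns the sign of the $Y$-eigenvalue into the sign of $K-\kappa(x_R)$. In coexistence we then have three negative eigenvalues when $K<\kappa(x_R)$, giving asymptotic stability by linearization. When $K>\kappa(x_R)$ there is one positive eigenvalue, giving instability.

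The only delicate point is that $m(x_R)=0$ could fail to be meaningful if the denominator of $m$ also vanished at $x_R$, i.e.\ if $x_R=x_P$. I avoid this entirely by not using $m$ in the eigenvalue computation: I only invoke $F(x_R)=r(x_R)$, interpreting $\kappa(x_R)$ directly as $1/(\delta(x_R)r(x_R))$. I expect no real obstacle; the computation is routine.
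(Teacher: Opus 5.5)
Your proposal is correct and is essentially the paper's argument. The paper obtains \eqref{ABstabil} by combining two facts: the AB equilibrium is stable within the $AB$ plane exactly in the coexistence case $a<c$, $d<b$, and stability then hinges on the sign of the per capita growth rate of $Y$ at AB. It then uses $m(x_R)=0$, hence $F(x_R)=r(x_R)$ and $\kappa(x_R)=1/(\delta(x_R)r(x_R))$. Your block-triangular Jacobian in $(x,P,Y)$ coordinates, with eigenvalues $x_R(1-x_R)(a-c+d-b)$, $-r(x_R)$ and $\beta\,(K r(x_R)\delta(x_R)-1)$, is just the rigorous version of those two ingredients.

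Your caveat about the degenerate case $x_R=x_P$ is fair, and the paper passes over it. There $m$ extends continuously to a nonzero constant, so $\kappa(x_R)$ must indeed be read as $1/(\delta(x_R)r(x_R))$ for the statement to hold.
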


In contrast to the AY and BY equilibria, which may be either stabilized or destabilized when $K$ is larger than a threshold $K^A_C$ or $K^B_C$, a positive AB equilibrium is either always unstable (if $a>c$, $d>b$), or becomes unstable for $K> \kappa(x_R)$ (if $a<c$, $d<b$).

\section{Bifurcations and stability or instability of the ABY equilibria}  \label{sec:bif}
Let us now turn to the question of bifurcations and their consequences on stability or instability of positive ABY equilibria. We know that, as $K$ is increased, positive ABY equilibria may be created or annihilated at $x=1$, $x=0$, or $x=x_R$, and, for the same values of $K$, AY, BY and AB equilibria change stability status. We also know that a pair of ABY equilibria may be simultaneously created or annihilated at some threshold values for $K$. These facts suggest that $K$ should be thought of as a \emph{bifurcation parameter} and that the threshold values at which ABY equilibria appear or disappear are \emph{bifurcation points} \cite{guckhol, perko}. 

Appendix \ref{app:sntc} brings some technical information on two among the most common types of local bifurcations: saddle-node (or tangential, or fold) and transcritical.

Simultaneous creation or annihilation of ABY equilibria pairs at a bifurcation point suggests a \emph{saddle-node} bifurcation. Theorems \ref{theogtAY}, \ref{theogtBY} and \ref{theoABstab} show, respectively, that AY, BY and AB equilibria may change their stability status only when an ABY equilibrium merges with them. This suggests \emph{transcritical} bifurcations.

Both saddle-node and transcritical bifurcations occur when the Jacobian matrix at an equilibrium has a zero eigenvalue with multiplicity 1. By calculating the partial derivatives of the field at a general point $(x,P,Y)$, substituting the values of $P$ and $Y$ by their values at a positive ABY equilibrium, see Theorem \ref{theoABYeq}, and $K$ by its value $\kappa(x)=1/(\delta(x)F(x))$, we obtain the Jacobian at a positive ABY equilibrium as a function of the $x$ coordinate of the equilibrium:
\begin{equation}  \label{ABYJac}
	J= \begin{pmatrix}
		0&0& -x(1-x) \left[(a'-c')x+(b'-d')(1-x)\right]\\
		\frac{r'(x)-\delta'(x)m(x)}{\delta(x)}&-F(x)&-1\\
		\beta \frac{m(x) \delta'(x)}{\delta(x)}&\beta \delta(x)m(x)&0
	\end{pmatrix}\;.
\end{equation}

We now calculate the determinant of $J$. The result is
\begin{equation}  \label{detJ}
	\det J=\beta \, \delta(x)\, F(x)^2\, x(1-x)\,\left(f_A(x,0)-f_B(x,0)\right)  \, \kappa'(x)\;.
\end{equation}
This shows that $J$ has a 0 eigenvalue if and only if $x=0$, $x=1$, $x=x_R$, or at critical points of $\kappa(x)$. 

If $x \in (0,1)$ and $m(x)>0$, conditions for the location of a positive ABY equilibrium, then the first row of $J$ is nonzero. This holds in particular for positive ABY equilibria at critical points of $\kappa$. On the other hand, the first row of $J$ vanishes for ABY equilibria at $x=0$ or $x=1$, i.e., ABY equilibria created or annihilated at $x=0$ or $x=1$. For ABY equilibria created or annihilated at $x=x_R$, the third row of $J$ vanishes, because $m(x_R)=0$.

We start by considering the case of a positive ABY equilibrium at a critical point of $\kappa$. Let $w$ be, as in Appendix \ref{app:sntc}, the left eigenvector of $J$ with eigenvalue $0$. The distribution of zero and nonzero elements in $J$ thus shows that the second component in $w$ must be nonzero for a positive ABY equilibrium at a critical point of $\kappa$.

The partial derivatives of the vector field with respect to the bifurcation parameter $K$ are
\begin{equation}  \label{Kder}
	(0,\frac{P^2}{K^2},0)\;.
\end{equation}
Hence, hypothesis \eqref{sn1} for a saddle-node bifurcation is always satisfied for a positive ABY equilibrium at a critical point $\tilde{x}$ of $\kappa$. Another hypothesis for a saddle-node is that the Jacobian has a single eigenvalue equal to 0 at the bifurcation point; it is satisfied if we also suppose $\kappa''(\tilde{x}) \neq 0$ and $\tilde{x} \neq x_R$.

Unfortunately, due to the complicated expressions appearing in the calculation, we were not able to prove that the second inequality \eqref{sn2} for a saddle-node bifurcation is always satisfied for a positive ABY equilibrium at a critical point of $\kappa$. For all the examples in Fig. \ref{fig:kappaex} we numerically calculated the left hand side of \eqref{sn2} and indeed found a nonzero value. But we cannot rule out the possibility that, for some fine-tuned values of the parameters, inequality \eqref{sn2} would become an equality. In this sense, we added to the statement of the upcoming theorem that it holds for \emph{generic} values for the parameters. Its proof, not formally presented, is only the verification of the hypotheses for saddle-node bifurcations stated in Appendix \ref{app:sntc} and commented above.

\begin{theorem}  \label{thsn}
	Let $\tilde{x} \in(0,1)$, $\tilde{x}\neq x_R$ be the fraction of A prey in a positive ABY equilibrium. Suppose also that $\kappa'(\tilde{x})=0$, $\kappa''(\tilde{x}) \neq 0$, and that the parameters have generic values such that inequality \eqref{sn2} is satisfied. Then the BW model has a saddle-node bifurcation at $\tilde{z}=(\tilde{x}, \frac{1}{\delta(\tilde{x})}, m(\tilde{x}))$ when $K=\kappa(\tilde{x})$.
\end{theorem}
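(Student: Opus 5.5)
We will verify the hypotheses of Sotomayor's saddle-node theorem (as stated in Appendix \ref{app:sntc}) for the vector field of \eqref{xpyBW}, viewed as a field $G(z,K)$ with $z=(x,P,Y)$ and bifurcation parameter $K$, at $\tilde z=(\tilde x,\frac{1}{\delta(\tilde x)},m(\tilde x))$ and $K=\kappa(\tilde x)$. Three things must be checked. First, $G(\tilde z,\kappa(\tilde x))=0$ with $\tilde z$ positive. Second, the Jacobian $J$ in \eqref{ABYJac} has $0$ as a simple eigenvalue, with right eigenvector $v$ and left eigenvector $w$. Third, the transversality conditions \eqref{sn1}, $w\cdot G_K\neq 0$, and \eqref{sn2}, $w\cdot D^2G(v,v)\neq0$, hold. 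The first point is immediate from Theorem \ref{theoABYeq}: $\tilde x\in(0,1)$ lies where $m>0$, and $\kappa(\tilde x)=1/(\delta F)$ is finite and positive, so $F(\tilde x)>0$.

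For the eigenvalue, we start from \eqref{detJ}. Since $\tilde x\in(0,1)$, $\tilde x\neq x_R$ (so $f_A(\tilde x,0)\neq f_B(\tilde x,0)$), $F(\tilde x)>0$, $\delta>0$ and $\kappa'(\tilde x)=0$, the factor $\kappa'$ is the only one that vanishes, and $\det J=0$. To show that $0$ is algebraically simple, we compute the characteristic polynomial $\lambda^3-\operatorname{tr}J\,\lambda^2+s_2\lambda-\det J$, where $s_2$ is the sum of the principal $2\times2$ minors. Simplicity amounts to $s_2(\tilde x)\neq0$. We expect $s_2$ to be related to $\frac{d}{dx}\det J$, or more directly to $\kappa''(\tilde x)$, after using $\kappa'(\tilde x)=0$ to eliminate $F'$ in favour of $\delta'$. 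This is where the hypothesis $\kappa''(\tilde x)\neq0$ should enter. Failing a clean identity, we would argue that $\det J$ changes sign along the branch of equilibria parametrized by $x$ exactly when $\kappa'$ does. In a neighbourhood of $\tilde x$, a simple zero of $\kappa'$ (guaranteed by $\kappa''\neq0$) makes exactly one eigenvalue cross zero transversally, so the kernel of $J$ is one-dimensional and generically simple.

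For \eqref{sn1}, we use the structure of $J$. Because $\tilde x(1-\tilde x)\neq0$ and $m(\tilde x)>0$, the first row is nonzero, with its only nonzero entry in the third column. Writing $wJ=0$, the third-column equation gives $-w_1\tilde x(1-\tilde x)[\cdot]-w_2=0$. The second-column equation gives $-w_2F+w_3\beta\delta m=0$. If $w_2=0$, these force $w_3=0$ and $w_1=0$, a contradiction, so $w_2\neq0$. Combined with \eqref{Kder}, $w\cdot G_K=w_2P^2/K^2\neq0$.

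The main obstacle is \eqref{sn2}. Here $v$ is obtained explicitly from $Jv=0$. The first row gives $v_3=0$, and then $v_2=v_1(r'-\delta' m)/(\delta F)$. The quadratic form $w\cdot D^2G(v,v)$ is computed from the second derivatives of the three components of \eqref{xpyBW}, which involve $r'',\delta'',m',m''$. We would attempt to simplify it with the relation $\kappa'(\tilde x)=0$, hoping to express it as a nonzero multiple of $\kappa''(\tilde x)$. The expressions become unwieldy, however. If no identity emerges, we will state \eqref{sn2} as a genericity assumption, as the theorem does: the left side is a nontrivial analytic function of the parameters, so it vanishes only on a proper subvariety. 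We will also confirm numerically that it is nonzero in the examples of Fig. \ref{fig:kappaex}. Sotomayor's theorem then yields the saddle-node bifurcation at $\tilde z$ when $K=\kappa(\tilde x)$.
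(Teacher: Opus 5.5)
Your route is the paper's own, which the paper only sketches: $\det J=0$ via \eqref{detJ}, $w_2\neq0$ from the second and third columns of $wJ=0$, hence \eqref{sn1} by \eqref{Kder}, and \eqref{sn2} by genericity. The step that does not go through is that $\kappa''(\tilde x)\neq0$ makes $0$ an algebraically simple eigenvalue; the paper likewise only asserts this. With $\det J=0$ the characteristic polynomial is $\lambda(\lambda^2-\operatorname{tr}J\,\lambda+s_2)$, and $s_2$ contains no second derivatives at all. Write $D(x)=(a'-c')x+(b'-d')(1-x)$ for the denominator of $m$. The principal minors then give
\[
s_2=\beta\,\frac{m}{\delta}\bigl(\delta^2+x(1-x)D\,\delta'\bigr)-x(1-x)\,D\,F\,m' .
\]
Here $\beta$ appears, but it does not enter $m,\delta,F,\kappa$. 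So whenever the two terms have the same sign, you can pick $\beta>0$ with $s_2(\tilde x)=0$ while keeping $\kappa'(\tilde x)=0\neq\kappa''(\tilde x)$. This happens in the example of Subsection \ref{sub:ex2}, where both terms are positive wherever $F>0$, in particular at $\tilde x$. Then $0$ is a double eigenvalue with one-dimensional kernel, a Bogdanov--Takens-type point, and Sotomayor's theorem does not apply. Your fallback does not rescue this. A one-dimensional kernel is automatic, since rows 2 and 3 of $J$ are independent, but it is not algebraic simplicity. A simple zero of $x\mapsto\det J$ along the branch is equally compatible with a $2\times2$ Jordan block. So $s_2(\tilde x)\neq0$ must be added as a hypothesis. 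For the $s$ versus $s+1$ count you also need the quadratic factor to have no purely imaginary roots.

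Where $\kappa''(\tilde x)\neq0$ really belongs is \eqref{sn2}, and it makes the genericity assumption and the unwieldy computation unnecessary. Differentiate $G(z(x),\kappa(x))\equiv0$ twice along the curve of ABY equilibria $z(x)=(x,1/\delta(x),m(x))$. At $\tilde x$, where $\kappa'=0$, this gives $Jz'=0$ and $D^2G(z',z')+Jz''+G_K\,\kappa''=0$. Pairing with $w$ and taking $v=z'(\tilde x)$ yields
\[
w\,D^2G(v,v)=-\kappa''(\tilde x)\,w_2\,\frac{P^2}{K^2}\neq0 .
\]
Note also that $v=(1,-\delta'/\delta^2,m')$ has $v_3=m'(\tilde x)$, not $0$. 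The $(1,1)$ entry of \eqref{ABYJac} is really $x(1-x)\,\partial_x(f_A-f_B)\big|_{Y=m}=x(1-x)D(x)m'(x)$, and only with this entry does \eqref{detJ} come out. Your claim that the first row has its only nonzero entry in column 3, and your $v_3=0$, should be corrected accordingly. Your argument for $w_2\neq0$ is unaffected, since it uses only columns 2 and 3.
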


The importance of the theorem above is the fact that it can be used to infer the stability or instability of the elements of the pair of ABY equilibria involved in the saddle-node bifurcation. Let $s$ denote the number of eigenvalues (counted with multiplicity) of the Jacobian at $\tilde{z}$ with negative real part, i.e. the dimension of the stable manifold of the equilibrium. Then, according to the theorem by Sotomayor \cite{sotomayor} cited at Appendix \ref{app:sntc}, the equilibria created or annihilated at $K=\kappa(\tilde{x})$ are hyperbolic and have stable manifolds with dimensions $s$ and $s+1$. Of particular interest is the case $s=2$, because it guarantees that one of the equilibria involved must be asymptotically stable (stable manifold with dimension 3), whereas the other is unstable (with stable manifold of dimension 2).

For ABY equilibria occurring at critical points of $\kappa$, we have no general result on the value of $s$, but we can numerically calculate the eigenvalues and discover the value of $s$. The most frequent value in our examples is exactly $s=2$. In the four examples illustrated in Fig. \ref{fig:kappaex} we have six critical points of $\kappa$. Five out of them exhibit $s=2$. The only exception is $\tilde{x}_2$ in panel (b), in which $s=0$. This proves that both ABY equilibria annihilated at $K=\kappa(\tilde{x}_2)$ in that example must be unstable. The reader may verify that in a neighborhood of each other critical point in the examples of Fig. \ref{fig:kappaex} we have exactly one stable and one unstable ABY equilibrium. Moreover, dimensions of the stable manifolds of these equilibria must be preserved unless hyperbolicity is lost due to some other bifurcation.

We now turn to the situations in which a positive ABY equilibrium is created or annihilated at $x=1$, $x=0$, or $x=x_R$. To be more concrete, let us consider at first only the case in which the equilibrium is created or annihilated at $x=1$. Theorem \ref{theogtAY} informs us that whenever that happens, i.e. $K=K^A_C$, the AY equilibrium must change its stability status. 

Our knowledge of the LVLPP model guarantees that the AY equilibrium, when restricted to the AY plane, is always asymptotically stable whenever it is positive. This means that the stable manifold of a positive AY equilibrium has dimension $s \geq 2$. If $s=2$, the equilibrium is unstable; if $s=3$, it is asymptotically stable. If the AY equilibrium changes from unstable to stable, or the contrary, then, besides the other eigenvalues with negative real parts, the third eigenvalue must switch from positive to negative, or the contrary. By continuity of the eigenvalues as functions of $K$, this eigenvalue must be 0 when $K=K^A_C$. This proves that we have a single eigenvalue of the Jacobian in the AY equilibrium equal to 0 at $K=K^A_C$.

If $\kappa'(1)>0$, a positive ABY equilibrium is located close to $x=1$, for $K$ slightly smaller than $K^A_C$, and becomes non-positive, i.e., it is annihilated, for $K>K^A_C$. In all examples we studied, the ABY equilibrium had stability status contrary to the nearby AY equilibrium just before merging with it. After merging, $K>K^A_C$, the AY equilibrium had changed stability status. 

If $\kappa'(1)<0$, a positive ABY equilibrium exists close to $x=1$, for $K$ slightly larger than $K^A_C$. At $K=K^A_C$, we know that the AY equilibrium changes stability status. In all examples we studied, including panels (b) and (c) in Fig. \ref{fig:kappaex}, the stability status of the ABY equilibrium just after its creation is again contrary to the stability status of the nearby AY equilibrium.

These facts suggest that a transcritical bifurcation occurs at $x=1$, $K=K^A_C$. A consequence would be the exchange of stability status between the merging equilibria, as observed in the examples. Let us then see whether the other hypotheses \eqref{tc1}, \eqref{tc2} and \eqref{sn2} are verified.

As the first row of the Jacobian \eqref{ABYJac} vanishes at $x=1$, it is easy to see that the left eigenvector with eigenvalue 0 is $w=(1,0,0)$. The partial derivative of the vector field with respect to the bifurcation parameter is still \eqref{Kder} and the dot product of these vectors vanishes. The first condition \eqref{tc1} for a transcritical bifurcation is thus valid. But we have a problem here: an easy calculation shows that the inequality \eqref{tc2} is always false.

We know that the 1-dimensional vector field $f_{\mu}(x)=\mu^3 x-x^2$ also violates only condition \eqref{tc2} for the transcritical bifurcation, and still the two equilibria which merge when $\mu=0$ exchange their stability. Although we have no theorem to justify it, it seems that we have a similar situation here for the BW model when a positive ABY equilibrium is created or annihilated at $x=1$. We cannot say that a transcritical bifurcation occurs, but something similar, with exchange of stability status between the merging equilibria, is likely to be true. 

Before generalizing into a single conjecture the three cases of positive ABY equilibria being created/annihilated at $x=0$, $x=1$, or $x=x_R$, let us summarize in detail, for the case $x=1$, what we have observed in many examples:
\begin{enumerate}
	\item Annihilation of a positive ABY equilibrium and destabilization of the AY equilibrium: If $\frac{a'}{c'}>\frac{a}{c}>1$ and $\kappa'(1)>0$, then an ABY equilibrium is annihilated at $K=K^A_C$, and it is unstable at least for $K$ slightly smaller than $K^A_C$.
	\item Creation of a positive ABY equilibrium and destabilization of the AY equilibrium: If $\frac{a'}{c'}>\frac{a}{c}>1$ and $\kappa'(1)<0$, then an ABY equilibrium is created at $K=K^A_C$, and it is asymptotically stable at least for $K$ slightly larger than $K^A_C$.
	\item Annihilation of a positive ABY equilibrium and stabilization of the AY equilibrium: If $\frac{a'}{c'}<\frac{a}{c}<1$ and $\kappa'(1)>0$, then an ABY equilibrium is annihilated at $K=K^A_C$, and it is asymptotically stable at least for $K$ slightly smaller than $K^A_C$. 
	\item Creation of a positive ABY equilibrium and stabilization of the AY equilibrium: If $\frac{a'}{c'}<\frac{a}{c}<1$ and $\kappa'(1)<0$, then an ABY equilibrium is created at $K=K^A_C$, and it is unstable at least for $K$ slightly larger than $K^A_C$. 
\end{enumerate}

A completely analogous situation occurs when a positive ABY equilibrium is annihilated or created at $x=0$. In all examples we studied, we saw that the stability status of a positive ABY equilibrium -- just before merging with the BY equilibrium (annihilation), or just after merging with it (creation) -- is contrary to the stability status of the BY equilibrium. Only condition \eqref{tc2} is violated, and we cannot say we have a transcritical bifurcation. 

Something similar also holds for the case when a positive ABY equilibrium is created or annihilated at $x=x_R$. The only difference here is that the third row of the Jacobian \eqref{ABYJac} vanishes, as already remarked, and $w=(0,0,1)$. Condition \eqref{tc1} holds, but \eqref{tc2} is false. 

We may summarize our observations on the change of stability status of AY, BY and AB equilibria and stability of nearby positive ABY equilibria in the following
\begin{conjecture} \label{conjtc}
	
	\begin{enumerate}
		Suppose generic positive values for all parameters in the BW model and also that $\kappa$ does not have a critical point at $x=1$, $x=0$, or $x=x_R$.
		\item If a positive AY (respectively BY or AB) equilibrium changes from stable to unstable when $K$ overcomes a threshold $K^A_C$ (resp. $K^B_C$ or $\kappa(x_R)$), then at $x=1$ (resp. $x=0$ or $x=x_R$) either an unstable positive ABY equilibrium is annihilated, or a stable positive ABY equilibrium is created. The stability or instability of the ABY equilibrium holds for $K$ close to the threshold.
		\item If a positive AY (respectively BY) equilibrium changes from unstable to stable when $K$ overcomes a threshold $K^A_C$ (respectively $K^B_C$), then at $x=1$ (resp. $x=0$) either a stable positive ABY equilibrium is annihilated, or an unstable positive ABY equilibrium is created. The stability or instability of the ABY equilibrium holds for $K$ close to the threshold.
		\item If a positive ABY equilibrium is created or annihilated at $x=x_R$ and the AB equilibrium is unstable for $K<\kappa(x_R)$, then the positive ABY equilibrium must be unstable at least for $K$ close to $\kappa(x_R)$. 
	\end{enumerate}
\end{conjecture}

In any case, there is an exchange of stability status between the positive ABY equilibrium and the other equilibrium involved, as if a transcritical bifurcation had occurred.

Taking as true Conjecture \ref{conjtc} above and using Theorem \ref{thsn}, we can predict the stability of all ABY equilibria appearing in the examples of Fig. \ref{fig:kappaex}, at least for values of $K$ close to the threshold at which they were annihilated or created.

In the example of Subsection \ref{sub:ex2}, panel (a) in Fig. \ref{fig:kappaex}, a saddle-node bifurcation occurs at $K=\kappa(\tilde{x})$. As the dimension of the stable manifold of the ABY equilibrium is $s=2$, one of the ABY equilibria for $K$ slightly larger than $\kappa(\tilde{x})$ should be stable and the other, unstable. As in this example, we have $d'/b'<d/b<1$, the BY equilibrium must change from unstable to stable when $K=K^B_C$. According to Conjecture \ref{conjtc}, it is necessary that the ABY equilibrium annihilated at $K=K^B_C$ is stable. It is reasonable then that the ABY equilibrium with $x<\tilde{x}$, created at $K=\kappa(\tilde{x})<K^B_C$, is the stable one, forcing the other, with $x>\tilde{x}$, to be unstable. 

In the next example, Subsection \ref{sub:ex3}, panel (b) in Fig. \ref{fig:kappaex}, we have two critical points of $\kappa$, so two saddle-node bifurcations. The first saddle-node bifurcation, at $K=\kappa(\tilde{x}_1)$, creates a pair of ABY equilibria. As in the previous example, the equilibrium created at $K=\kappa(\tilde{x}_1)$ has $s=2$. Thus, one of the ABY equilibria created must be stable and the other, unstable. Similarly to the previous example, the BY equilibrium must change from unstable to stable. As this requires annihilation of a stable ABY equilibrium, the simpler explanation is that the ABY equilibrium with smaller $x$ is the stable one. As we also have $a'/c'>a/c>1$, the AY equilibrium will change from stable to unstable at $K=K^A_C$. By Conjecture \ref{conjtc}, the ABY equilibrium created at $K=K^A_C$ must be stable for $K$ slightly larger than $K^A_C$. A curious thing is that this ABY equilibrium, although initially stable as predicted by Conjecture \ref{conjtc}, becomes unstable when $K \approx 6.8$. This loss of stability ``out of nothing", before the second saddle-node bifurcation, will be the subject of the next section. For $K=\kappa(\tilde{x}_2)$ the remaining two positive ABY equilibria are annihilated. As already mentioned, the dimension of the stable manifold of the only positive ABY equilibrium when $K=\kappa(\tilde{x}_2)$ is $s=0$.

We now analyze the example of Subsection \ref{sub:ex4}, panel (c) in Fig. \ref{fig:kappaex}. As $a'/c'>a/c>1$, the AY equilibrium is stable as soon as it becomes positive, but becomes unstable at $K=K^A_C$. The ABY equilibrium created at $K=K^A_C$ is then initially stable. The AB equilibrium is unstable, as it must be when $a>c$ and $d>b$, and, as proved in Theorem \ref{theoABstab}, cannot change from unstable to stable. If the ABY equilibrium created at $K=K^A_C$ remained stable until $K=\kappa(x_R)$, it would contradict the third statement of Conjecture \ref{conjtc}. This forces, as in the previous example, the ABY equilibrium to change from stable to unstable ``out of nothing" for some $K<\kappa(x_R)$. More precisely, this happens when $K \approx 1.66$.

The reader is invited to analyze the more complicated example of Subsection \ref{sub:ex5} in which three saddle-node bifurcations occur.

\section{Hopf bifurcations and limit cycles} \label{sec:hopf}
In the numerical explorations we did with the BW model, sometimes, for some ranges of the parameter $K$, we saw solutions that, after a transient, seemed to converge to periodic functions. In Fig. \ref{fig:periodic}, we illustrate one example of such solutions. The example uses $K=1.7$ and the parameter values described in Subsection \ref{sub:ex4}, illustrated in panel (c) of Fig. \ref{fig:kappaex}. From a biological point of view, such solutions mean that it is possible for the three species to coexist stably not only in equilibria with constant numbers, but also with periodic fluctuating populations.
\begin{figure}
	\centering
	\includegraphics[width=0.9\linewidth]{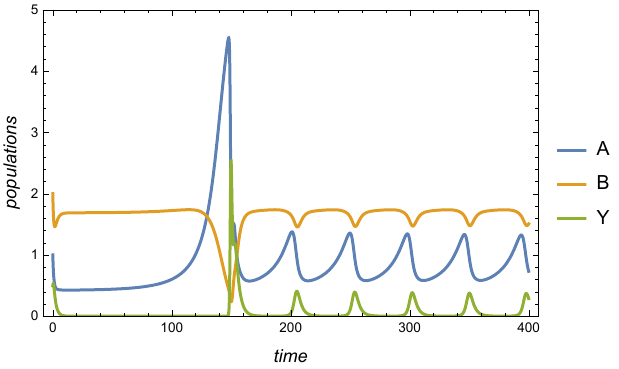}
	\caption{Graph of the populations of the three species as functions of time. After a transient, the solutions become periodic. For the carrying capacity control parameter we are using $K=1.7$ and the remaining parameter values are as in the example of Subsection \ref{sub:ex4}, panel (c) in Fig. \ref{fig:kappaex}.}
	\label{fig:periodic}
\end{figure}

We noticed in Subsection \ref{sub:ex4} that an asymptotically stable ABY equilibrium created at $x=1$ when $K=\kappa(1) \approx 0.80$ became unstable at $K \approx 1.66$ ``out of nothing", or, more precisely, without merging with any other equilibrium. For the same value of $K$, solutions such as the example in Fig. \ref{fig:periodic} appear. Such a phenomenon suggests that hyperbolicity of the stable ABY equilibrium was lost not due to an eigenvalue passing through 0, but due to a pair of complex conjugate eigenvalues crossing the imaginary axis. This phenomenon, in which a stable solution becomes unstable when a pair of conjugate eigenvalues cross the imaginary axis, and \emph{limit cycles} appear, is called a \emph{Hopf bifurcation} \cite{guckhol, perko}. 

We conjecture that what we see in Fig. \ref{fig:periodic} is a solution which orbit converges to a stable limit cycle due to a Hopf bifurcation that occurred for $K \approx 1.66$. By numerical calculation of the eigenvalues, it is possible to see that in fact at $K \approx 1.66$ a conjugate pair of eigenvalues with real part very close to 0 exists. For smaller values of $K$, the real parts of the pair are negative and the ABY equilibrium is asymptotically stable, but real parts become positive for $K$ slightly larger than 1.66 and the equilibrium becomes unstable. 

In the example of Subsection \ref{sub:ex3} the loss of stability of the ABY equilibrium when $K \approx 6.8$ is also numerically seen to be due to a pair of conjugate eigenvalues crossing the imaginary axis. In this case, it seems that we have a Hopf bifurcation, too, but we were not able to find periodic solutions as in Fig. \ref{fig:periodic}. Most probably, the limit cycle created at the Hopf bifurcation is unstable in that example.

In our numerical explorations, we found evidence of Hopf bifurcations and stable periodic solutions with the presence of the three species in other examples. Although we do not have proofs, these observations suggest that stable periodic coexistence of the three species in the BW model may be common and that further work on this topic  is necessary.

\section{Conclusions} \label{sec:conc}
In this paper, we revisited the game-theoretic two-prey-one-predator model introduced by Braga and Wardil \cite{bw}. One of the main concerns of their paper was whether one prey species would lead to extinction of the other, or both prey would coexist stably with the predators.

By rewriting the model as a frequency-dependent Lotka-Volterra-type predator-prey model, we were able to prove several results. In particular, we obtained an equation for calculating equilibrium solutions with the presence of both prey types and predators (ABY equilibria). We provided examples in which, for fixed values of the parameters, we have either no positive ABY equilibrium, one positive ABY equilibrium, or more than one positive ABY equilibrium. By numerical calculation of the eigenvalues of the Jacobian at the equilibria, we showed that ABY equilibria may be stable or unstable. We have also proved very precise theorems on stability or instability of equilibria with only two species present (AY, BY and AB equilibria), with a game-theoretic interpretation for the results. Nonetheless, we were not able to obtain simple conditions for stability or instability of ABY equilibria.

Carrying capacities of the prey are proportional to parameter $K$, see \eqref{BWmodel}, and we found that it is interesting to treat $K$ as a bifurcation parameter. When $K$ is too small, we prove, Theorem \ref{thfamine}, that there are not enough prey to feed predators and, as a consequence, predators are led to extinction. Further increase of $K$ causes changes in the stability status of the many equilibria in the model. In some examples, e.g., the ones illustrated in panels (b) and (d) of Fig. \ref{fig:kappaex}, a rich cascade of bifurcations occurs when $K$ is increased. After overcoming a positivity threshold, AY and BY equilibria may be stable or unstable. According to conditions stated in Theorems \ref{theogtAY} and \ref{theogtBY}, they may or may not change stability status. The AB equilibrium may be stable for small values of $K$ and become unstable for larger values. Positive ABY equilibria may appear and disappear. In some situations, we may have more than one stable equilibrium for the same set of parameters. We also noticed the existence of stable periodic solutions with both prey types and predators.

We proved, see Theorem \ref{thsn}, that pairs of positive ABY equilibria may be created when $K$ overcomes a local minimum value of $\kappa(x)$, or annihilated when $K$ reaches a local maximum value of the same function. As this occurs as a saddle-node bifurcation, we gain information on the dimensions of the stable manifold of the components of the pair. A frequent situation in the examples is one of the ABY equilibria in a pair being stable and the other, unstable.

We also proved, see Theorems \ref{theogtAY}, \ref{theogtBY} and \ref{theoABstab}, that AY, BY and AB equilibria change their stability status whenever they merge with an ABY equilibrium, not necessarily positive. This phenomenon suggests the occurrence of transcritical bifurcations, in which the merging equilibria exchange their stability status. Unfortunately, one condition for the characterization of the bifurcations as transcritical fails, but we observed that the change of stability occurs in all examples we studied. We present the result of these observations as Conjecture \ref{conjtc}. If the conjecture is true, it would be valuable for predicting the stability or instability of ABY equilibria.

Besides saddle-node and similar to transcritical bifurcations, we also found evidence for Hopf bifurcations in some examples, because stable ABY equilibria become unstable without merging with other equilibria at the same value of $K$ in which stable periodic solutions seem to appear. At these values for $K$, we can numerically see that a pair of complex conjugate eigenvalues crosses the imaginary axis.

In conclusion, we may say that the BW model exhibits a very rich gallery of solutions that might explain several ecological situations involving two competing prey types and a common predator. In particular, we notice that changes in the carrying capacities of the prey, due, e.g., to some external factor, may cause abrupt changes such as extinction of one type, or allowance for a rare type to overcome the most frequent one.

We conclude by noticing that we see lots of opportunities for future work on the BW model, e.g., on conditions for the existence of the periodic solutions, or on proving Conjecture \ref{conjtc}.

\appendix
\section{The Lotka-Volterra logistic predator-prey model}\label{app:lvlpp}
Let $P$ denote the size of a prey population and $Y$ the size of a population of predators that feed exclusively on these prey. A variation of the classical Lotka-Volterra predator-prey model \cite{hofsig, edel, murray} is
\begin{equation} \label{lvlpp}
	\begin{cases} 
		P' = (r - \frac{P}{K} - \delta Y)\, P\\
		Y' = ( -\beta + \gamma P) \,Y
	\end{cases}\;.
\end{equation}

All parameters are positive: $r$ denotes the birth rate of prey, $K$ is proportional to the prey carrying capacity, $\delta$ is the capture rate, $\beta$ is the decay rate of the predators if prey are absent, and $\gamma$ measures the efficiency of the predators. If predators are absent, the equation for the prey becomes the well-known \emph{logistic} population model; the \emph{carrying capacity} of the prey is $Kr$.

Fig. \ref{fig:lvlpporb} illustrates orbits of some solutions to \eqref{lvlpp}. The straight lines in the figure are the nullclines $N_1: r - \frac{P}{K} - \delta Y=0$ and $N_2: -\beta + \gamma P=0$ in which either component of the vector field associated with the model vanishes. There are two possible cases according to whether the nullclines intersect in the first quadrant or not. The nullclines divide the first quadrant into regions where each component of the solution to \eqref{lvlpp} either increases or decreases. Above $N_1$ we have $P'<0$ and below $N_1$, $P'>0$. To the left of $N_2$, we have $Y'<0$ and to the right, $Y'>0$.

In the left panel of Fig. \ref{fig:lvlpporb} we show the case $Kr< \frac{\beta}{\gamma}$, in which the nullclines do not intersect in the first quadrant. We prove below that in this case, as illustrated by the orbits, all solutions to the model starting in the first quadrant end up with extinction of the predators, whereas the prey population tends to the carrying capacity. The biological interpretation of this situation is that the limit population $Kr$ of the prey is too small to properly feed the predators.

In the right panel of Fig. \ref{fig:lvlpporb}, we have $Kr> \frac{\beta}{\gamma}$ and the intersection point of the nullclines corresponds to an equilibrium solution of the model. The biological interpretation of this case is that the carrying capacity of the prey is large enough to sustain a population of predators. As illustrated by the orbits, all solutions starting in the first quadrant seem to converge to the equilibrium point. This will also be proved below.

The following result formalizes what is illustrated in Fig. \ref{fig:lvlpporb} and commented above:
\begin{figure}
	\centering
	\includegraphics[width=0.9\linewidth]{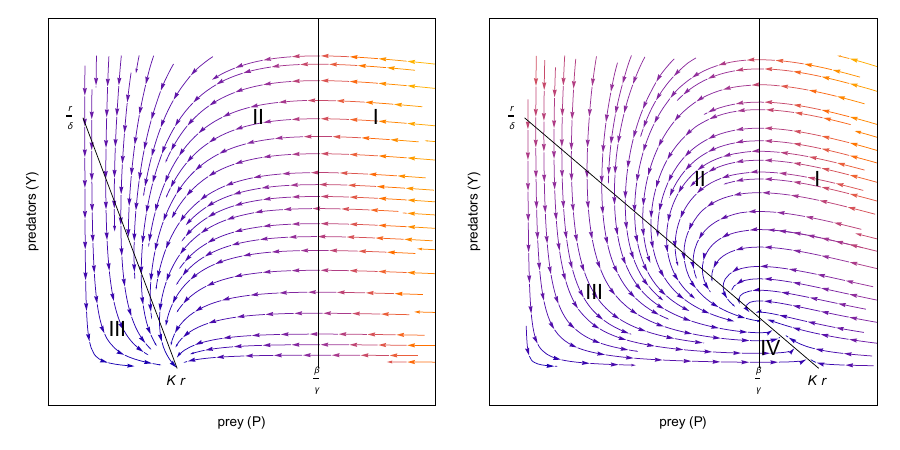}
	\caption{Orbits of the LVLPP model. At left, the $Kr<\frac{\beta}{\gamma}$ case, in which predators are extinct. At right, the case $Kr>\frac{\beta}{\gamma}$, in which predators and prey will coexist stably at an equilibrium point.}
	\label{fig:lvlpporb}
\end{figure}

\begin{theorem}  \label{thlvlpp}
	The maximal solution $(P(t),Y(t))$ of \eqref{lvlpp} with initial condition $(P_0,Y_0)$ in the first quadrant, i.e. $P_0>0$, $Y_0>0$, is defined in an interval that contains $[0, \infty)$.    
	
	Moreover, if $Kr< \frac{\beta}{\gamma}$, $(P(t),Y(t))$ tends to $(Kr,0)$ as $t \rightarrow \infty$. If, on the other hand, $Kr> \frac{\beta}{\gamma}$, $(P(t),Y(t))$ tends to the equilibrium $(\frac{\beta}{\gamma}, \frac{1}{\delta}(r- \frac{\beta}{\gamma K}))$ as $t \rightarrow \infty$.
\end{theorem}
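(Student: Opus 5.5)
We prove Theorem \ref{thlvlpp} in three stages: global existence, the case $Kr<\frac{\beta}{\gamma}$ by a nullcline/trapping argument (the one the proof of Theorem \ref{thfamine} refers to), and the case $Kr>\frac{\beta}{\gamma}$ by a Lyapunov function of Volterra type.

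\emph{Global existence.} The axes $P=0$ and $Y=0$ are invariant, so by uniqueness the open first quadrant is invariant. From $P'\le (r-\frac{P}{K})P$, comparison with the logistic equation gives $P(t)\le \max\{P_0,Kr\}=:\bar P$ for all $t\ge 0$ in the domain. Then $Y'\le(-\beta+\gamma\bar P)Y$, and Gronwall's inequality gives $Y(t)\le Y_0e^{(\gamma\bar P-\beta)t}$. Hence the solution stays in a compact set on every bounded time interval, so the maximal interval contains $[0,\infty)$.

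\emph{Case $Kr<\frac{\beta}{\gamma}$.} First I show that the orbit eventually enters the strip $P<\frac{\beta}{\gamma}$. If $P_0\ge \frac{\beta}{\gamma}$, then as long as $P(t)\ge\frac{\beta}{\gamma}>Kr$ we have $P'\le (r-\frac{\beta}{\gamma K})P$ with $r-\frac{\beta}{\gamma K}<0$. So $P$ decays exponentially and must cross $P=\frac{\beta}{\gamma}$ in finite time. The region $\{0<P<\frac{\beta}{\gamma}\}$ is forward invariant, since on $P=\frac{\beta}{\gamma}$ we have $P'<0$ because $\frac{\beta}{\gamma}>Kr$.

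Next, inside this strip it suffices to prove $Y\to0$, which is where the strict gap matters. Let $\bar P=\max\{P(t_0),Kr\}<\frac{\beta}{\gamma}$, where $t_0$ is the entry time. By the logistic comparison, $P(t)\le\bar P$ for $t\ge t_0$. Hence $Y'\le(-\beta+\gamma\bar P)Y$, so $Y\to0$ exponentially.

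Finally, $P$ is asymptotically autonomous with limit equation $P'=(r-P/K)P$. Concretely, for any $\varepsilon>0$ we eventually have $(r-\delta\varepsilon-\frac{P}{K})P\le P'\le(r-\frac{P}{K})P$. Comparing with two logistic equations gives $\liminf P\ge K(r-\delta\varepsilon)$ and $\limsup P\le Kr$. Letting $\varepsilon\to0$ yields $P\to Kr$.

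\emph{Case $Kr>\frac{\beta}{\gamma}$.} Let $(P^*,Y^*)=(\frac{\beta}{\gamma},\frac{1}{\delta}(r-\frac{\beta}{\gamma K}))$, which lies in the open first quadrant. Using $r=\frac{P^*}{K}+\delta Y^*$ and $\beta=\gamma P^*$, the system becomes
\[P'=\left(-\tfrac{P-P^*}{K}-\delta(Y-Y^*)\right)P,\qquad Y'=\gamma(P-P^*)Y.\]
Take
\[V(P,Y)=\gamma\left(P-P^*-P^*\ln\tfrac{P}{P^*}\right)+\delta\left(Y-Y^*-Y^*\ln\tfrac{Y}{Y^*}\right).\]
A direct computation, in which the cross terms $\pm\gamma\delta(P-P^*)(Y-Y^*)$ cancel, gives
\[\dot V=-\tfrac{\gamma}{K}(P-P^*)^2\le0.\]

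The function $V$ is proper on the open quadrant, since it tends to $\infty$ both near the axes and at infinity. So the sublevel sets are compact and forward invariant, and the $\omega$-limit set is nonempty, compact, and lies in the open quadrant.

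The main step, and where care is needed, is the LaSalle invariance principle. The set $\{\dot V=0\}$ is the line $\{P=P^*\}$. On it $P'=-\delta(Y-Y^*)P^*$, so the largest invariant subset is the single point $(P^*,Y^*)$. Therefore the $\omega$-limit set is $\{(P^*,Y^*)\}$, and the solution converges to the equilibrium.
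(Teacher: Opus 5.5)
Your proof is correct.

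\textbf{Case $Kr>\frac{\beta}{\gamma}$.} This coincides with the paper's argument. Since $\gamma P^*=\beta$ and $\delta Y^*=r-\frac{\beta}{\gamma K}$, the paper's function $V=\beta\log P-\gamma P+(r-\frac{\beta}{\gamma K})\log Y-\delta Y$ is exactly minus yours, up to an additive constant. The conclusion is then drawn by the same invariance argument. Your remark that $V$ is proper on the open quadrant (so the $\omega$-limit set is nonempty, compact and away from the axes) supplies a step the paper covers only by ``arguments similar to those in the case $Kr<\frac{\beta}{\gamma}$''.

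\textbf{Global existence and the case $Kr<\frac{\beta}{\gamma}$.} Here your route genuinely differs. The paper uses the nullcline partition of Fig.~\ref{fig:lvlpporb} into regions I, II, III. It shows orbits can only pass I$\to$II$\to$III. In region I it uses Gronwall both to exclude blow-up and to force exit (this is your strip-entry step). In regions II and III, $P$ and $Y$ are monotone and bounded, so the orbit has a limit, which must be an equilibrium and hence $(Kr,0)$.

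You replace this case analysis by comparison inequalities:
\begin{itemize}
\item one logistic comparison gives global existence in both cases at once;
\item a uniform bound $P\le\bar P<\frac{\beta}{\gamma}$ gives exponential decay of $Y$;
\item sandwiching $P$ between two logistic equations gives $P\to Kr$.
\end{itemize}
The paper's route stays close to the phase portrait. Yours gives explicit rates and does not use that the nullcline regions are fixed in time. Its $Y\to 0$ part therefore carries over essentially verbatim to the setting of Theorem~\ref{thfamine}, where the nullclines move with $x(t)$.

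\textbf{A small slip.} If $P_0\ge\frac{\beta}{\gamma}$, then at the entry time $t_0$ you have $P(t_0)=\frac{\beta}{\gamma}$. So $\bar P=\max\{P(t_0),Kr\}$ equals $\frac{\beta}{\gamma}$, and the decay rate $-\beta+\gamma\bar P$ is $0$, not negative. There are two easy fixes:
\begin{itemize}
\item take $t_0$ slightly later, which is allowed since $P'<0$ on $P=\frac{\beta}{\gamma}$;
\item or drop the strip step entirely: comparison with the logistic solution from time $0$ already gives $\limsup_{t\to\infty}P(t)\le Kr<\frac{\beta}{\gamma}$, hence $P\le Kr+\varepsilon<\frac{\beta}{\gamma}$ for all large $t$, which is all the exponential decay of $Y$ requires.
\end{itemize}
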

\begin{proof}
	The portions in the first quadrant of the coordinate axes are covered by orbits of the system. Thus, by the uniqueness of solutions to initial value problems, no orbits starting in the first quadrant can leave it.
	
	Consider at first the case $Kr< \frac{\beta}{\gamma}$. The nullclines divide the first quadrant into regions I, II and III as shown in the left panel of Fig. \ref{fig:lvlpporb}. On the boundary between regions I and II the vector field is horizontal pointing to the left. On the boundary between regions II and III the vector field is vertical pointing downwards. So, orbits can pass from region I to II and to region II to III, but cannot traverse in the opposite senses.
	
	An orbit starting in region III cannot leave that region. As $P$ and $Y$ are bounded above and below in III, neither can tend to $\pm \infty$ in finite time. Thus, the maximal solution for an initial condition in region III is defined up to infinite time. The $\omega$-limit set for such an initial condition is well-defined. As in III, $P$ and $Y$ are monotonic and bounded, $\lim_{t \rightarrow \infty}(P(t),Y(t))$ must exist. The existence of this limit, together with the fact that the $\omega$-limit set must be invariant by the flow of the system, implies that it must be an equilibrium point. There are two equilibria in region III, or at its boundary: $(0,0)$ and $(Kr,0)$. Only the second fulfills the requirement that $P$ must increase in III. All orbits starting in III thus tend to $(Kr,0)$.
	
	Orbits starting in region II may either traverse to III or always remain in II. If they traverse, the reasoning in the preceding paragraph applies and the orbits tend to $(Kr,0)$. If they remain in region II, they cannot tend to $\pm \infty$ in finite time, because $P$ is bounded and $Y$ is bounded below by 0 and decreasing. Thus, these solutions are also well-defined to infinite time. As $P$ and $Y$ are both decreasing in region II, orbits that do not leave region II must converge to an equilibrium point. As the only equilibrium in II or at its boundary is $(Kr,0)$, all orbits always in II tend to $(Kr,0)$.
	
	Consider now orbits starting in region I. For them, as $Y$ is increasing in region I, it might be possible that $Y$ tends to infinity in finite time. But, as $P$ decreases, we have, while the orbit lies in I, 
	\[Y'(t) \leq (-\beta+ \gamma P_0)Y\;.\]
	Using Gronwall's inequality \cite{hirschsmaledevaney}, we have 
	\[Y(t) \leq Y_0 e^{(-\beta+ \gamma P_0)t}\;,\]
	which proves that $Y$ cannot tend to infinity in finite time, and, as a consequence, that maximal solutions for initial conditions in region I must be defined up to infinite time. We will prove, however, that such solutions cannot stay in region I for infinite time. In fact, while the solution is in I, we have $P \geq \frac{\beta}{\gamma}$ and $Y \geq 0$. Then 
	\[P'\leq (r- \frac{\beta}{\gamma K})P\;.\]
	Again, by Gronwall's inequality, we must have
	\[P(t) \leq P_0 e^{(r-\frac{\beta}{\gamma K})t}\,,\]
	showing that the solution cannot obey the inequality $P \geq \frac{\beta}{\gamma}$ forever. This proves that orbits starting in I must traverse to II at some time. As they traverse, they must tend, as already shown, to $(Kr,0)$.
	
	To finish the proof, we now consider $Kr > \frac{\beta}{\gamma}$, as illustrated in the right panel of Fig. \ref{fig:lvlpporb}. Considering the signs of $P'$ and $Y'$ in each region, we see that orbits may transition from region I to II, from II to III, from III to IV and from IV to I, but no other transitions are possible. Arguments similar to those in the case $Kr < \frac{\beta}{\gamma}$ prove that maximal solutions are defined up to infinite time, and that no solutions can tend to the equilibria $(0,0)$ and $(Kr,0)$. In principle, orbits could rotate counterclockwise around the equilibrium at the intersection of the nullclines without tending to it. In order to prove that this does not happen, we use, as suggested in \cite{hofsig}, the Lyapunov function
	\[V(P,Y)= \beta \log P- \gamma P+(r- \frac{\beta}{\gamma K}) \log Y- \delta Y\;.\]
	The reader may calculate the time derivative of $V$ along the orbits
	\[ \frac{d}{dt}V(P(t),Y(t)) = \frac{\partial V}{\partial P}P'+ \frac{\partial V}{\partial Y}Y'=\frac{\gamma}{K}(P- \frac{\beta}{\gamma})^2\;.\]
	
	By the Lyapunov theorem \cite{hofsig}, the $\omega$-limit set of any orbit must be contained in the set where the above expression vanishes. But, as $\omega$-limit sets must also be invariant by the flow, in this case all such sets are the equilibrium at the intersection of the nullclines.
\end{proof}

\section{Saddle-node and transcritical bifurcations} \label{app:sntc}
For a review of the basics of the bifurcation theory of dynamical systems, we suggest chapter 3 of \cite{guckhol} or chapter 4 of \cite{perko}. 

Suppose we have a system of $n$ ODEs dependent on a parameter $\mu \in \mathbb{R}$, i.e.
\[z'=f(z,\mu)\;,\]
where $z\in \mathbb{R}^n$ and the vector field $f$ is a smooth function defined in a domain on $\mathbb{R}^n \times \mathbb{R}$ with values in $\mathbb{R}^n$.

Interesting things happen when at a certain value $\mu=\mu_0$ the system has a non-hyperbolic equilibrium at $z=z_0$, $f(z_0,\mu_0)=0$, in which a \emph{single} eigenvalue of the Jacobian matrix $Df(z_0, \mu_0)$ is 0.

Let $v$ be the right eigenvector of $Df(z_0, \mu_0)$ with eigenvalue 0 and $w$ be the corresponding left eigenvector. Suppose also that the number of eigenvalues, counted with their multiplicities, of $Df(z_0, \mu_0)$ with negative real part (the dimension of the stable manifold at the equilibrium $z_0$) is $s$. Of course, the number of eigenvalues of $Df(z_0, \mu_0)$ with positive real part, the dimension of the unstable manifold, is $n-s-1$.

We quote here, as in \cite{guckhol} and \cite{perko}, a theorem by Sotomayor \cite{sotomayor}. Under the additional \emph{transversality} hypotheses
\begin{equation} \label{sn1}
	w \, [\frac{\partial f}{\partial \mu}(z_0,\mu_0)] \neq 0
\end{equation}
and 
\begin{equation} \label{sn2}
	w \, [D^2f(z_0,\mu_0)(v,v)] \neq 0
\end{equation}
there is a smooth curve of equilibria in $\mathbb{R}^n \times \mathbb{R}$ passing through $(z_0,\mu_0)$ and tangent to the hyperplane $\mathbb{R}^n\times \{\mu_0\}$. Depending on the signs of the expressions in \eqref{sn1} and \eqref{sn2}, we may have either no equilibria near $(z_0,\mu_0)$ for $\mu<\mu_0$ and two equilibria near $(z_0,\mu_0)$ for $\mu>\mu_0$, or two equilibria near $(z_0,\mu_0)$ for $\mu<\mu_0$ and no equilibria near $(z_0,\mu_0)$ for $\mu>\mu_0$. The two equilibria close to $(z_0,\mu_0)$ (either for $\mu<\mu_0$ or for $\mu>\mu_0$) are hyperbolic and one of them has a stable manifold of dimension $s$, and the other has a stable manifold of dimension $s+1$. 

This phenomenon is called a \emph{saddle-node} bifurcation at $\mu=\mu_0$. Notice that as the bifurcation parameter $\mu$ is increased, we may have at $\mu=\mu_0$ either the creation of a \emph{pair} of equilibria, or the annihilation of a \emph{pair} of equilibria.

Sotomayor \cite{sotomayor} also proved that if instead of \eqref{sn1} and \eqref{sn2} we have
\begin{equation} \label{tc1}
	w \, [\frac{\partial f}{\partial \mu}(z_0,\mu_0)] = 0\;,
\end{equation}
\begin{equation} \label{tc2}
	w \, [D \frac{\partial f}{\partial \mu}(z_0, \mu_0) v] \neq 0
\end{equation}
and again \eqref{sn2}, then the phenomenon is called a \emph{transcritical} bifurcation. In this case, for each $\mu$ in a neighborhood of $\mu_0$ we have two equilibria. For $\mu<\mu_0$ one equilibrium has a stable manifold with dimension $s$, the other of $s+1$ as dimension for the stable manifold. For $\mu>\mu_0$ the two equilibria \emph{exchange stability}, i.e. the one with stable manifold with dimension $s$ increases to dimension $s+1$, the other decreases from $s+1$ to $s$. Contrary to the case of the saddle-node bifurcation, here we always have two equilibria in a neighborhood of $\mu_0$, which merge when $\mu=\mu_0$ and exchange their stabilities.

\section*{Acknowledgments}
The first author was supported by a CNPq-Brasil scholarship, the second author is supported by FAPEMIG grants APQ-01784-22 and RED-00133-21.

The authors thank Artur C. Fassoni and Denis C. Braga for helpful suggestions and discussion.

\end{document}